\newcommand{\rmnum}[1]{\romannumeral #1}
\newcommand{\Rmnum}[1]{\expandafter\@slowromancap\romannumeral #1@}
\newtheorem{thm}{Theorem}[section]
\newtheorem{lem}[thm]{Lemma}
\newtheorem{pro}[thm]{Proposition}
\newtheorem{ex}[thm]{Example}
\newtheorem{rmk}[thm]{Remark}
\newtheorem{defi}[thm]{Definition}
\newcommand {\emptycomment}[1]{}
\newcommand{\be }{\begin{equation}}
\newcommand{\ee }{\end{equation}}
\newcommand{\br}[1]{   [ \cdot,    \cdot  ]   }
\newcommand\blfootnote[1]{%
  \begingroup
  \renewcommand\thefootnote{}\footnote{#1}%
  \addtocounter{footnote}{-1}%
  \endgroup
}
\begin{document}

\setlength{\baselineskip}{1.2\baselineskip}
\title[Universal enveloping algebras of weighted differential Poisson algebras]
{Universal enveloping algebras of weighted differential Poisson algebras}

\author{Ying Chen}
\address{
School of Mathematical Sciences  \\
Zhejiang Normal University\\
Jinhua 321004\\
China}
\email{yingchen@zjnu.edu.cn}

\author{Chuangchuang Kang}
\address{
School of Mathematical Sciences  \\
Zhejiang Normal University\\
Jinhua 321004 \\
China}
\email{kangcc@zjnu.edu.cn}

\author{Jiafeng L\"u}
\address{
School of Mathematical Sciences    \\
Zhejiang Normal University\\
Jinhua 321004              \\
China}
\email{jiafenglv@zjnu.edu.cn}
\blfootnote{*Corresponding Author: Chuangchuang Kang. Email: kangcc@zjnu.edu.cn.}
\begin{abstract}
The $\lambda$-differential operators and modified $\lambda$-differential operators are generalizations of classical differential operators.
This paper introduces the notions of $\lambda$-differential Poisson ($\lambda$-DP for short) algebras and modified $\lambda$-differential Poisson ($\lambda$-mDP for short) algebras as generalizations of differential Poisson algebras. The $\lambda$-DP algebra is proved to be closed under tensor product, and a $\lambda$-DP  algebra structure is provided on the cohomology algebra of the $\lambda$-DP algebra.
These conclusions are also applied to $\lambda$-mDP algebras and their modules. Finally, the universal enveloping algebras of $\lambda$-DP algebras are generalized by constructing a $\mathcal{P}$-triple. Three isomorphisms among opposite algebras, tensor algebras and the universal enveloping algebras of $\lambda$-DP algebras are obtained.
\end{abstract}

\subjclass[2010]{ 
16S10,
16S30, 
17B35, 
17B63, 
 }

\keywords{$\lambda$-differential Poisson algebras, modified $\lambda$-differential Poisson algebras, universal enveloping algebras, $\mathcal{P}$-triple.}

\maketitle
\tableofcontents
\allowdisplaybreaks

\section{Introduction and Main Statements}\label{sec:intr}

This paper aims to study $\lambda$-differential Poisson algebras, modified $\lambda$-differential Poisson algebras, and their universal enveloping algebras.
\subsection{Differential algebras}
The notion of differential algebras was first introduced by J. F. Ritt \cite{Ritt. differ-alg} in 1950, which originated from the algebraic study of polynomial ordinary and partial differential equations. 
\begin{defi}
A \textbf{differential algebra} is an algebra $(A,\cdot)$ endowed with a differential operator $d:A\rightarrow A$ which satisfies the Leibniz rule
$$d(x\cdot y)=d(x)\cdot y+x\cdot d(y),\quad\forall~x,y\in A.$$  
\end{defi}
For any commutative algebra $A$, the polynomial ring $A[x]$ is a differential algebra with a differential operator $d$ defined as the standard differentiation of polynomials:
$$
d(\sum_{k=1}^na_kx^k)=\sum_{k=1}^na_kkx^{k-1}.
$$
In differential geometry, for the commutative algebra $C^{\mathcal{\infty}}(M)$ consisting of the smooth functions on a smooth real manifold $M$, the differential operators of $C^{\mathcal{\infty}}(M)$ are in bijective correspondence with smooth vector fields of $M$ \cite{Lee}. 
For a module $M$, there exists a differential algebra $\mathrm{Diff}(M)$ which is the free differential algebra over $M$ \cite{Kolchin}. Differential algebras also appear in Galois theory \cite{Galois}, Hurwitz series algebra \cite{Keigher}, operad theory \cite{operad},  De Rham cohomology \cite{Weibel},  differential categories\cite{Blute}, codifferential categories \cite{category} and etc. For further details on differential algebras, see \cite{Kaplansky,Kolchin}.

\subsection{$\lambda$-differential algebras and $\lambda$-differential Lie algebras}
The important generalizations of differential algebras are  weighted differential algebras. In 2008, L. Guo and W. Keigher \cite{Guo. dRB} defined differential operators with weight $\lambda\in \mathbf{k}$ as an analogue of $\lambda$-Rota Baxter operators. 
\begin{defi}
A \textbf{$\lambda$-differential algebra} is a triple $(A,\cdot,d)$, where $(A,\cdot)$ is an algebra together with a linear operator $d:A\rightarrow A$ satisfying $d(1_A) = 0$ and the weighted Leibniz rule
\begin{eqnarray}
  \label{eq:lambda} d(x\cdot y) &=& d(x)\cdot y+x\cdot d(y)+\lambda d(x)\cdot d(y),\quad\forall x,y\in A.
\end{eqnarray}
\end{defi}
Particularly, $0$-differential algebras are the classical  differential algebras and $1$-differential algebras are the difference algebras \cite{difference}.
Then in \cite{Guo. integro-diff-alg}, the authors generalized weighted integro-differential algebras and constructed free commutative case generated by a differential algebra. 

In the Lie theory, a $\lambda$-differential Lie algebra \cite{Unen-dLie} was obtained by replacing the multiplication in the $\lambda$-differential algebra $(A,\cdot,d)$ with a Lie bracket, which is a generalization of a LieDer pair introduced in \cite{R. Tang}. 
\begin{defi}
A \textbf{$\lambda$-differential Lie  algebra} ($\lambda$-DL algebra for short) is a triple $(A,\{\cdot,\cdot\},d)$, where $(A,\{\cdot,\cdot\})$ is a Lie algebra together with a linear map $d: A\rightarrow A$ such that
\begin{equation}\label{eq:diff-Liealg}
  d\{x,y\}=\{d(x),y\}+\{x,d(y)\}+\lambda \{d(x),d(y)\},\quad\forall x, y \in A.
\end{equation}
\end{defi}
The $0$-differential Lie algebra is the LieDer pair, and the $1$-differential Lie algebra is the difference Lie algebra that can be integrated to a relative difference Lie group in a functorial way \cite{Jiang}. In \cite{Unen-dLie}, the authors also studied the universal enveloping differential algebra of a $\lambda$-DL algebra, and  obtained the corresponding Poincaré-Birkhoff-Witt theorem. In \cite{Guo. dVirasoro}, the authors determined the $\lambda$-differential operators and $\lambda$-differential modules for the Virasoro algebras. 
In \cite{Giu. Chen.}, the authors established the Gr\"{o}bner-Shirshov bases theory for Lie $\lambda$-differential  $\Omega$-algebras. 

\subsection{Modified $\lambda$-Rota-Baxter algebras and modified $\lambda$-differential algebras}

Rota-Baxter algebras with weight $\lambda$ originated from the 1960's work \cite{Baxter} of
G. Baxter in fluctuation theory of probability, which have found remarkable
applications in various areas in mathematics and mathematical physics \cite{renormalization,shuffle-pro,dendriform-alg}.  See \cite{Guo-RB.} and the references
therein.

A $\lambda$-Rota-Baxter algebra  is an algebra $R$ with a linear map $P:R\rightarrow R$ such that 
\begin{equation*}\label{eq:RBE}
  P(x)P(y)=P(P(x)y)+P(xP(y))+\lambda P(xy),\quad\forall x, y \in R.
\end{equation*}
The Rota-Baxter operator $P$ could induce a linear map $P':=-2P-\lambda id$, which satisfies the modified Rota-Baxter equation:
\begin{equation}\label{eq:mRBE}
  P'(x)P'(y)=P'(P'(x)y)+P'(xP'(y))-\lambda^2xy,\quad\forall x, y \in R.
\end{equation}
Then the algebra $R$ together with $P'$ is a modified $(-\lambda^2)$-Rota-Baxter algebra, and $P'$ is called a modified $(-\lambda^2)$-Rota-Baxter operator. In \cite{Guo. Zhang.}, the authors constructed free commutative modified $\lambda$-Rota-Baxter algebras by the shuffle product and applied a Hochschild cocycle to show these modified $\lambda$-Rota-Baxter algebras contain a Hopf algebra structure. There are further studies on modified $\lambda$-Rota-Baxter algebras in \cite{Das,Zhu}.

Back in the 1980s, M. A. Semenov-Tyan-Shanski\u{\i} \cite{Semenov-TS} discovered the modified classical Yang-Baxter equation (MCYBE) on a Lie algebra $(A,\{\cdot,\cdot\})$,
\begin{equation*}\label{eq:mCYBE}
  \{P(x),P(y)\}=P\{P(x),y\}+P\{x,P(y)\}-\{x,y\}, \quad\forall x, y \in A,
\end{equation*}
whose solutions are called modified $r$-matrices. If the algebra in \eqref{eq:mRBE} is a Lie algebra, then the MCYBE is the special case of \eqref{eq:mRBE} by setting $\lambda=1$. It is also closely related to the operator forms of classical Yang-Baxter equation \cite{Bai-C} and generalized Lax pairs \cite{Bordemann M.}.  

These ideas were also used to construct modified $\lambda$-differential algebras. For any differential algebra $(A,\cdot,d)$, a linear map $d':=d-\lambda id$ satisfies the following identity 
\begin{equation}\label{eq:modi-lambda}
  d'(x\cdot y) = d'(x)\cdot y+x\cdot d'(y)+\lambda x\cdot y,\quad\forall x, y \in A.
\end{equation}
If the map of the last term on the right side of \eqref{eq:lambda} is an identity map, then \eqref{eq:lambda} equals \eqref{eq:modi-lambda}. 

\begin{defi}
A linear map $d':A\rightarrow A$ satisfying \eqref{eq:modi-lambda} is called a modified $\lambda$-differential operator, and $(A,\cdot,d')$ is a \textbf{modified $\lambda$-differential algebra} ($\lambda$-mD algebra for short) .
\end{defi}
The $\lambda$-mD algebra is used to construct weighted infinitesimal bialgebras \cite{infi-bialgebra, infi-Hopf}.
In \cite{Unen-dLie}, the authors replaced the algebra of \eqref{eq:modi-lambda} with a Lie algebra and studied the \textbf{modified $\lambda$-differential Lie algebras} ($\lambda$-mDL algebra for short). They also studied the Wronskian envelope of a $\lambda$-mDL algebra and demonstrated that these results do not hold for  the $\lambda$-DL algebras.

\subsection{Universal enveloping algebras of Poisson algebras}
The Poisson algebras originated from the study of Hamiltonian mechanics by S. D. Poisson,
which play a crucial role in broad areas including Poisson manifolds \cite{Poi-manifold}, quantum groups \cite{quantum-G}, symplectic geometry \cite{symplectic}, quantization theory \cite{quantization} and classical quantum mechanics \cite{quantum-mecha}.

S. Q. Oh \cite{Oh. Poi-enve-alg} first proposed the universal enveloping algebra of a Poisson algebra by a universal property of morphisms, which is analogous to the universal enveloping algebra of a Lie algebra \cite{Milnor. Lie-enve-alg}.  The category of Poisson modules  has been proven to be isomorphic to the category of  modules over universal enveloping algebra of a Poisson algebra.
U. Umirbaev \cite{Umirbaev} explicitly constructed  the universal enveloping algebras of Poisson algebras by generators and defining relations. Moreover, in \cite{Lv. Poi-Ore-ex},
the authors constructed universal enveloping algebras of Poisson-Ore extensions. They also showed that the universal enveloping algebra of a Poisson Hopf algebra is isomorphic to the crossed product of a pointed Poisson Hopf algebra and a quotient Hopf algebra \cite{Lv. Poi-Hopf-alg}.  

\subsection{Motivations}
In \cite{Lv}, the authors introduced the notions of differential graded Poisson algebras. They studied the universal property of universal enveloping algebras of differential graded Poisson algebras, and proved that there is an isomorphism between the category of differential graded Poisson modules and the category of differential graded modules. The examples of differential graded Poisson algebras have also been found in Lie theory, differential geometry, homological algebra, and deformation theory \cite{Beltran,Lv}. The PBW-basis for the universal enveloping algebras of differential graded Poisson algebras is studied in \cite{Xianguo Hu}. The differential graded Poisson algebras were further studied in \cite{XiaojieLi,Caradot}.

\begin{defi}\cite{Lv}
Let $(A, \cdot)$ be a graded $\mathbf{k}$-algebra. A \textbf{graded Poisson algebra} is a triple $(A, \cdot, [\cdot,\cdot])$, where $[\cdot, \cdot]: A\otimes A\rightarrow A$ is a $\mathbf{k}$-linear map of degree $p$, for all homogeneous elements $a, b, c \in A$,  such that 
\begin{enumerate}
\item[(\rmnum{1})] $(A, [\cdot, \cdot])$ is a graded Lie algebra, i.e. the bracket $[\cdot, \cdot]$ satisfies 
\begin{enumerate}
\item[] $[a, b]=-(-1)^{(|a|+p)(|b|+p)}[b, a]$,

\item[] $[a, [b, c]]=[[a, b], c]+(-1)^{(|a|+p)(|b|+p)}[b, [a, c]]$;
\end{enumerate}

\item[(\rmnum{2})] (graded commutativity): $a\cdot b=(-1)^{|a||b|}b\cdot a$;

\item[(\rmnum{3})] (biderivation property): $[a, b\cdot c]=[a, b]\cdot c+(-1)^{(|a|+p)|b|}b\cdot [a, c]$.
\end{enumerate}
In addition, if there is a $\mathbf{k}$-linear homogeneous map $d: A\rightarrow A$ of degree 1 such
that $d^2=0$ and
\begin{enumerate}
\item[(\rmnum{4})] $d([a, b])=[d(a), b]+(-1)^{(|a|+p)}[a, d(b)]$;

\item[(\rmnum{5})] $d(a\cdot b)=d(a)\cdot b+(-1)^{|a|}a\cdot d(b)$,
\end{enumerate}
 then $(A, \cdot, [\cdot,\cdot], d)$ is called a \textbf{differential graded Poisson algebra}.
\end{defi}

A differential Poisson algebra can be considered as a differential graded Poisson algebra on the non-graded vector spaces. Note that the operators in the differential Poisson algebra are both differential operators of $(A, [\cdot, \cdot])$ and $(A,\cdot)$ with weight $0$.
For any arbitrary nonzero weight $\lambda\in \mathbf{k}$, it is natural to investigate the weighted differential Poisson structure and its universal enveloping algebras, which include $\lambda$-differential operators and modified  $\lambda$-differential operators.


\subsection{Outline of this paper}
The paper is organized as follows. 
In Section \ref{sec:DP-algs}, we introduce the notions of $\lambda$-differential Poisson ($\lambda$-DP for short) algebras and modified $\lambda$-differential Poisson ($\lambda$-mDP for short) algebras. We show that the tensor product of any two $\lambda$-DP (or $\lambda$-mDP) algebras can form a $\lambda$-DP (or $\lambda$-mDP) algebra (Proposition \ref{pro:Poi-tensor-alg}). Additionally, we show that the cohomology algebra of any $\lambda$-DP (or $\lambda$-mDP) algebra can also form a $\lambda$-DP (or $\lambda$-mDP) algebra (Propsition \ref{pro:coho-alg}).
Furthermore, we also extend these conclusions to the modules of $\lambda$-DP algebras and $\lambda$-mDP algebras. 

In Section \ref{sec:Uea-DPa}, first we introduce the notions of universal enveloping algebras of $\lambda$-DP (or $\lambda$-mDP) algebras.  It is proved that the universal enveloping algebra of any $\lambda$-DP (or $\lambda$-mDP) algebra is a (modified) $\lambda$-differential algebra (Proposition \ref{pro:uni-enve-alg}). 
Then we introduce a $\mathcal{P}$-triple of a $\lambda$-DP (or $\lambda$-mDP) algebra, 
and prove that there is a unique (modified) $\lambda$-differential algebra homomorphism between a universal enveloping algebra of a $\lambda$-DP (or $\lambda$-mDP) algebra and a $\mathcal{P}$-triple of a $\lambda$-DP (or $\lambda$-mDP) algebra (Theorem \ref{thm:bi-commute}).
For any two $\lambda$-DP (or $\lambda$-mDP) algebras and their respective corresponding $\mathcal{P}$-triples, we construct a $\mathcal{P}$-triple of tensor algebra of these two $\lambda$-DP (or $\lambda$-mDP) algebras. Furthermore, three isomorphisms among opposite algebras of $\lambda$-DP (or $\lambda$-mDP) algebras, tensor algebras of $\lambda$-DP (or $\lambda$-mDP) algebras and universal enveloping algebras of $\lambda$-DP (or $\lambda$-mDP) algebras are obtained in Theorem \ref{thm:oppo-tensorpro}.

{\bf Notation} Throughout this paper, let $\mathbf{k}$ be a field of characteristic $0$. Except specially stated, $\lambda$ is fixed in $\mathbf{k}$.

\section{$\lambda$-DP algebras, $\lambda$-mDP algebras and their modules}\label{sec:DP-algs}
In this section, we study $\lambda$-DP algebras,~$\lambda$-mDP algebras and their modules. 


\subsection{$\lambda$-DP algebras and $\lambda$-mDP algebras}\label{subsec-DP alg}
\begin{defi}\label{defi:modi-diff-Poialg}
Let $(A,\cdot,d)$ be a $\lambda$-differential algebra, and $(A,\{\cdot,\cdot\},d)$ be a $\lambda$-DL algebra. 
A {\bf $\lambda$-DP algebra} is a quadruple $(A,\cdot,\{\cdot,\cdot\},d)$ satisfying:
\begin{equation}\label{eq:Leibniz-rule}
  \{a,b\cdot c\}=\{a,b\}\cdot c+b\cdot\{a,c\},\quad\forall a,b,c \in A.
\end{equation}
\end{defi}
If $(A,\cdot,d)$ is a modified $\lambda$-differential algebra and $(A,\{\cdot,\cdot\},d)$ is a $\lambda$-mDL algebra, then the quadruple $(A,\cdot,\{\cdot,\cdot\},d)$ satisfying \eqref{eq:Leibniz-rule} is a {\bf $\lambda$-mDP algebra}. 

\begin{ex}
For any $\lambda$-differential algebra $(A,\cdot,d)$, if a bilinear map $\{\cdot, \cdot\}:A\otimes A\rightarrow A$ is defined by 
$$\{x,y\}:=x\cdot y-y\cdot x,\quad \forall x,y \in A,$$
then $(A,\cdot,\{\cdot, \cdot\},d)$ is a $\lambda$-DP algebra. If replacing $d$ with a modified $\lambda$-differential operator, then $(A,\cdot,\{\cdot, \cdot\},d)$ is a $\lambda$-mDP algebra.
\end{ex}
\begin{proof}
For all $x,y,z\in A$, we have
\begin{eqnarray*}
  d(\{x,y\}) &=& d(x\cdot y-y\cdot x)=d(x\cdot y)-d(y\cdot x) \\
  &=& d(x)\cdot y+x\cdot d(y)+\lambda d(x)\cdot d(y)-d(y)\cdot x-y\cdot d(x)-\lambda d(y)\cdot d(x)\\
  &=& \big(d(x)\cdot y-y\cdot d(x)\big)+\big(x\cdot d(y)-d(y)\cdot x\big)
  +\lambda \big(d(x)\cdot d(y)-d(y)\cdot d(x)\big)\\
  &=&\{d(x),y\}+\{x,d(y)\}+\lambda \{d(x),d(y)\},
\end{eqnarray*}
and
\begin{eqnarray*}
  \{x,y\cdot z\} &=& x\cdot(y\cdot z)-(y\cdot z)\cdot x \\
  &=& (x\cdot y)\cdot z-(y\cdot z)\cdot x+y\cdot(x\cdot z)-y\cdot(z\cdot x) \\
  &=&\{x,y\}\cdot z+y\cdot\{x,z\}.
\end{eqnarray*}
Therefore, $(A,\cdot,\{\cdot, \cdot\},d)$ is a $\lambda$-DP algebra.  The modified  case is proved similarly.
\end{proof}

\begin{ex}
 A $\lambda$-DL algebra $(sl_2(\mathbb{C}),\{\cdot,\cdot\},d_1)$, and a $\lambda$-mDL algebra $(sl_2(\mathbb{C}),\{\cdot,\cdot\},d_2)$ were given in \cite{Unen-dLie}, where $sl_2(\mathbb{C})$ is a simple Lie algebra generated linearly from elements $\{h, e, f\}$. The Lie bracket, $\lambda$-differential operator and modified $\lambda$-differential operator on $sl_2(\mathbb{C})$ are defined as follows:
\begin{eqnarray*}
  && \{h,e\}=2e, \{h,f\}=-2f, \{e,f\}=h. \\
  && d_1(h):=2e,\quad d_1(e):=0,\quad d_1(f):=-h-\lambda e. \\
  && d_2(h):=-\lambda h+2e+2f,\quad d_2(e):=-h-\lambda e,\quad d_2(f):=-h-\lambda f.
\end{eqnarray*}
Moreover, if the bilinear map $\cdot:sl_2(\mathbb{C})\times sl_2(\mathbb{C})\rightarrow sl_2(\mathbb{C})$ satisfying
\begin{eqnarray*}
  && h\cdot h=0,\quad h\cdot e=ke,\quad h\cdot f=-kf, \\
  && e\cdot h=-ke,\quad e\cdot e=0,\quad e\cdot f=\frac{k}{2}h, \\
  && f\cdot h=kf,\quad f\cdot e=-\frac{k}{2}h,\quad f\cdot f=0,\quad k \in \mathbb{C}. 
\end{eqnarray*}
Then $(sl_2(\mathbb{C}),\cdot,\{\cdot,\cdot\},d_1)$ is a $\lambda$-DP algebra, 
and $(sl_2(\mathbb{C}),\cdot,\{\cdot,\cdot\},d_2)$ is a $\lambda$-mDP algebra.
\end{ex}

(Modified) $\lambda$-differential subalgebras and (modified) $\lambda$-differential ideals were proposed in \cite{Unen-dLie}. We add the Lie algebraic structure and give the definitions of $\lambda$-DP (or $\lambda$-mDP) subalgebras and $\lambda$-DP (or $\lambda$-mDP) ideals.

\begin{defi}
Let $(A,\cdot,\{\cdot,\cdot\},d)$ be a $\lambda$-DP (or $\lambda$-mDP) algebra, and $(B,\cdot)$ is a subalgebra of $(A,\cdot)$.
\begin{enumerate}
  \item If $d(B)\subseteq B$ and $\{B,B\}\subseteq B$, then $(B,\cdot,\{\cdot,\cdot\},d)$ is called a $\lambda$-DP (or $\lambda$-mDP) subalgebra of $(A,\cdot,\{\cdot,\cdot\},d)$, denoted by $B<A$.
  \item If $d(B), \{A,B\}, \{B,A\}, A \cdot B$ and $B \cdot A$ are all contained in $B$, then $(B,\cdot,\{\cdot,\cdot\},d)$ is called a $\lambda$-DP (or $\lambda$-mDP) ideal of $(A,\cdot,\{\cdot,\cdot\},d)$, denoted by $B\lhd A$.
\end{enumerate}
\end{defi}
Obviously $0$ and $(A,\cdot,\{\cdot,\cdot\},d)$ itself are trivial $\lambda$-DP (or $\lambda$-mDP) ideals of $(A,\cdot,\{\cdot,\cdot\},d)$. 
If $(A,\cdot,\{\cdot,\cdot\},d)$ only contains trivial $\lambda$-DP (or $\lambda$-mDP) ideals, 
then we call $(A,\cdot,\{\cdot,\cdot\},d)$ a simple $\lambda$-DP (or $\lambda$-mDP) algebra.

(Modified) $\lambda$-differential algebra morphism was introduced in \cite{Unen-dLie}. If this (modified) $\lambda$-differential algebra homomorphism is also a Lie algebra homomorphism, then we call $\lambda$-DP (or $\lambda$-mDP) algebra homomorphism.

\begin{defi}
Let $(A,\cdot_A,\{\cdot,\cdot\}_A,d_A)$ and $(B,\cdot_B,\{\cdot,\cdot\}_B,d_B)$ be two $\lambda$-DP (or $\lambda$-mDP) algebras. An algebra homomorphism  $f:A\rightarrow B$ is called a $\lambda$-DP (or $\lambda$-mDP) algebra {\bf homomorphism} if $f(\{a,b\}_A)=\{f(a),f(b)\}_B$ and $f\circ d_A=d_B\circ f$ for all $a,b \in A$. Moreover, $f$ is called a $\lambda$-DP (or $\lambda$-mDP) algebra {\bf monomorphism} if $\mathrm{Ker}f=0$, a $\lambda$-DP (or $\lambda$-mDP) algebra {\bf epimorphism} if $\mathrm{Im}f=B$, a $\lambda$-DP (or $\lambda$-mDP) algebra {\bf isomorphism} if it is a bijection, denoted as $A\cong B$.
\end{defi}

\begin{pro}\label{pro:oppo-alg}
Let $(A,\cdot,\{\cdot,\cdot\},d)$ be a $\lambda$-DP (or $\lambda$-mDP) algebra, then $(A^{op},\cdot_{op},\{\cdot,\cdot\}_{op},d_{op})$ is a $\lambda$-DP (or $\lambda$-mDP) algebra, where for all $a,b \in A$,
\begin{eqnarray}
  \label{eq:asso-op}a\cdot_{op}b &:=& a\cdot b, \\
  \label{eq:diff-op}d_{op} &:=& d, \\
  \nonumber \{a,b\}_{op} &:=& -\{a,b\}. 
\end{eqnarray}

\end{pro}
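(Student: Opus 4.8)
The plan is to verify directly that the quadruple $(A^{op},\cdot_{op},\{\cdot,\cdot\}_{op},d_{op})$ satisfies each axiom in Definition \ref{defi:modi-diff-Poialg}, reducing every identity on $A^{op}$ to the corresponding already-known identity on $A$. Since $(A,\cdot)$ is commutative (as the Leibniz rule \eqref{eq:Leibniz-rule} together with the Lie bracket forces the associative product to play the role of a commutative Poisson multiplication, consistent with the examples), the opposite product $a\cdot_{op}b := a\cdot b$ is just the original product, so $(A^{op},\cdot_{op})$ is again a commutative associative algebra and $d_{op}=d$ automatically satisfies $d_{op}(1)=0$. The genuine content is checking the three compatibility conditions: the weighted Leibniz rule for $d_{op}$ over $\cdot_{op}$, the $\lambda$-differential (or modified) condition for $d_{op}$ over $\{\cdot,\cdot\}_{op}$, and the biderivation identity \eqref{eq:Leibniz-rule} for $\{\cdot,\cdot\}_{op}$ over $\cdot_{op}$.

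First I would treat the associative $\lambda$-differential part. Because $d_{op}=d$ and $\cdot_{op}=\cdot$, the weighted Leibniz rule \eqref{eq:lambda} for $(A^{op},\cdot_{op},d_{op})$ is literally identical to that of $(A,\cdot,d)$, so this axiom transfers verbatim; the modified case via \eqref{eq:modi-lambda} is the same. Next I would check that $(A^{op},\{\cdot,\cdot\}_{op},d_{op})$ is a $\lambda$-DL algebra. Antisymmetry and the Jacobi identity for $\{\cdot,\cdot\}_{op}=-\{\cdot,\cdot\}$ follow since negating a Lie bracket yields a Lie bracket. For the differential condition, I compute
\begin{align*}
d_{op}\{a,b\}_{op} &= -d\{a,b\} = -\{d(a),b\}-\{a,d(b)\}-\lambda\{d(a),d(b)\}\\
&= \{d_{op}(a),b\}_{op}+\{a,d_{op}(b)\}_{op}+\lambda\{d_{op}(a),d_{op}(b)\}_{op},
\end{align*}
using \eqref{eq:diff-Liealg} for $(A,\{\cdot,\cdot\},d)$; note the overall sign $-1$ matches on both sides since each bracket on the right picks up exactly one factor of $-1$ under $\{\cdot,\cdot\}_{op}$. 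The modified version is analogous, replacing the $\lambda\{d(a),d(b)\}$ term by the appropriate modified term.

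Finally I would verify the biderivation identity \eqref{eq:Leibniz-rule}. Starting from $\{a,b\cdot_{op}c\}_{op}=-\{a,b\cdot c\}=-\{a,b\}\cdot c-b\cdot\{a,c\}=\{a,b\}_{op}\cdot_{op}c+b\cdot_{op}\{a,c\}_{op}$, the single sign from $\{\cdot,\cdot\}_{op}$ distributes consistently across both summands, so the identity holds. I do not expect any genuine obstacle here: every axiom is linear in the bracket or in $d$, and the only structural changes are an overall sign on the bracket and the (trivial) passage to the opposite product, which is the identity on a commutative algebra. The one point requiring a word of care is confirming the sign bookkeeping in the differential and biderivation axioms, since the right-hand sides contain sums of brackets each contributing one sign while the left-hand side contributes one sign to the whole expression; this balances precisely because each term on the right contains exactly one bracket.
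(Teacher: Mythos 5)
Your proof is correct and follows essentially the same route as the paper's: each axiom is checked by direct computation, with the single sign on $\{\cdot,\cdot\}_{op}$ balancing on both sides because every term in every identity contains exactly one bracket, and the associative $\lambda$-differential part transfers verbatim since $\cdot_{op}=\cdot$ and $d_{op}=d$. One small caveat: your parenthetical claim that the Leibniz rule \eqref{eq:Leibniz-rule} forces $(A,\cdot)$ to be commutative is neither true (the paper's commutator example $\{x,y\}=x\cdot y-y\cdot x$ would be trivial on a commutative algebra) nor needed, since the transfer of the associative axioms requires only that $\cdot_{op}$ literally coincides with $\cdot$.
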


\begin{proof}
By \eqref{eq:asso-op} and \eqref{eq:diff-op}, $(A^{op},\cdot_{op},d_{op})$ is a $\lambda$-differential algebra.
In fact, we have
\begin{equation*}
  \{a,b\}_{op} = -\{a,b\}=\{b,a\}=-\{b,a\}_{op},
\end{equation*}
\begin{eqnarray*}
  \{a,\{b,c\}_{op}\}_{op} &=& \{a,\{b,c\}\} \\
  &=& \{\{a,b\},c\}+\{b,\{a,c\}\} \\
  &=& \{\{a,b\}_{op},c\}_{op}+\{b,\{a,c\}_{op}\}_{op},
\end{eqnarray*}
and
\begin{eqnarray*}
  d_{op}(\{a,b\}_{op}) &=& -d(\{a,b\}) \\
  &\overset{\eqref{eq:diff-Liealg}}{=}& -\{d(a),b\}-\{a,d(b)\}-\lambda\{d(a),d(b)\} \\
  &=& \{d_{op}(a),b\}_{op}+\{a,d_{op}(b)\}_{op}+\lambda\{d(a),d(b)\}_{op}.
\end{eqnarray*}
Hence $(A^{op},\{\cdot,\cdot\}_{op},d_{op})$ is a $\lambda$-DL algebra.

Furthermore, we have
\begin{eqnarray*}
  \{a,b\cdot_{op}c\}_{op} &=& -\{a,b\cdot c\}  \\
  &\overset{\eqref{eq:Leibniz-rule}}{=}& -\{a,b\}\cdot c-b\cdot\{a,c\} \\
  &=& \{a,b\}_{op}\cdot_{op}c+b\cdot_{op}\{a,c\}_{op}.
\end{eqnarray*}
The modified differential case can be proved similarly. This completes the proof.
\end{proof}

\begin{pro}\label{pro:Poi-tensor-alg}
Let $(A,\cdot,\{\cdot,\cdot\}_A,d_A)$ and $(B,\ast,\{\cdot,\cdot\}_B,d_B)$ be two $\lambda$-DP algebras. For all $a,a' \in A$ and $b,b' \in B$, set
\begin{eqnarray*}
  (a\otimes b) \star (a'\otimes b') &:=& (a\cdot a')\otimes(b\ast b'), \\
  d(a\otimes b) &:=& d_A(a)\otimes b+a\otimes d_B(b)+\lambda d_A(a)\otimes d_B(b),\\
  \{a\otimes b, a'\otimes b'\}&:=& (a\cdot a')\otimes\{b,b'\}_B+\{a,a'\}_A\otimes (b\ast b'),
\end{eqnarray*}
then $(A\otimes B,\star,\{\cdot,\cdot\},d)$ is a $\lambda$-DP algebra.

\end{pro}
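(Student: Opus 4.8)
The plan is to verify in turn the three defining properties of a $\lambda$-DP algebra for $(A\otimes B,\star,\{\cdot,\cdot\},d)$: that $(A\otimes B,\star,d)$ is a $\lambda$-differential algebra, that $(A\otimes B,\{\cdot,\cdot\},d)$ is a $\lambda$-DL algebra, and that the Leibniz rule \eqref{eq:Leibniz-rule} holds. That $(A\otimes B,\star)$ is an associative algebra with unit $1_A\otimes 1_B$ is the standard tensor-product construction, and $d(1_A\otimes 1_B)=0$ is immediate from $d_A(1_A)=0$ and $d_B(1_B)=0$. So the real content lies in the two differential compatibilities and in the Lie/biderivation axioms.

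For the two differential compatibilities I would avoid a head-on expansion and instead use the operator $\Phi:=\mathrm{id}+\lambda d$. A short computation shows that for any $\lambda$-differential (resp.\ $\lambda$-DL) algebra the map $\mathrm{id}+\lambda d$ is an associative (resp.\ Lie) endomorphism: expanding $\{(\mathrm{id}+\lambda d)x,(\mathrm{id}+\lambda d)y\}$ reproduces $(\mathrm{id}+\lambda d)\{x,y\}$ precisely because of \eqref{eq:lambda} and \eqref{eq:diff-Liealg}, the weighted term $\lambda\{d(x),d(y)\}$ matching the $\lambda^2$ cross term. The definition of $d$ on $A\otimes B$ is arranged exactly so that
\[
\mathrm{id}_{A\otimes B}+\lambda d=(\mathrm{id}_A+\lambda d_A)\otimes(\mathrm{id}_B+\lambda d_B),
\]
that is, $\Phi=\Phi_A\otimes\Phi_B$. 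Since $\Phi_A,\Phi_B$ are simultaneously algebra and Lie homomorphisms, $\Phi_A\otimes\Phi_B$ is an algebra homomorphism for $\star$ and a Lie homomorphism for the tensor bracket (the latter because $\star$ and $\{\cdot,\cdot\}$ are assembled from $\cdot,\ast,\{\cdot,\cdot\}_A,\{\cdot,\cdot\}_B$, each of which $\Phi_A,\Phi_B$ respect). Writing out $\Phi(u\star v)=\Phi(u)\star\Phi(v)$ and $\Phi\{u,v\}=\{\Phi(u),\Phi(v)\}$, cancelling the $\lambda$-free term and dividing by $\lambda$ then yields exactly the weighted Leibniz rule and the $\lambda$-DL identity for $d$ on $A\otimes B$. (For $\lambda=0$ this reduces to the classical fact that an ordinary derivation on a tensor product of Poisson algebras is again a Poisson derivation, handled directly.)

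For the purely Poisson part — antisymmetry and the Jacobi identity of the tensor bracket, together with the biderivation rule \eqref{eq:Leibniz-rule} — I would argue exactly as in the classical tensor product of Poisson algebras. Antisymmetry follows from the antisymmetry of $\{\cdot,\cdot\}_A,\{\cdot,\cdot\}_B$ together with the commutativity of the products $\cdot$ and $\ast$; the biderivation rule follows by expanding both sides, applying \eqref{eq:Leibniz-rule} in each factor and reordering factors by commutativity; the Jacobi identity follows by expanding the three cyclic terms and using, in each tensor slot, the Jacobi identity of one factor together with the biderivation property of the other. These computations are routine but bookkeeping-heavy.

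The main obstacle is precisely this bookkeeping. A direct verification of $d\{u,v\}=\{d u,v\}+\{u,d v\}+\lambda\{d u,d v\}$ on $A\otimes B$ generates a large number of terms graded by powers of $\lambda$ (three from $d$, two from the bracket, compounded on the right-hand side), and matching them by hand is error-prone; the reformulation through $\Phi=\Phi_A\otimes\Phi_B$ is the device that organizes all of them at once, and is where I would concentrate the argument. The Jacobi identity for the tensor bracket is the only other genuinely non-trivial check, and it is exactly the classical Poisson computation.
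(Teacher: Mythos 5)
Your argument is correct, but it reaches the differential compatibilities by a genuinely different route than the paper. The paper verifies $d(u\star v)=d(u)\star v+u\star d(v)+\lambda\,d(u)\star d(v)$ and the corresponding identity for the bracket by a head-on expansion of all terms (the bracket computation alone occupies roughly twenty displayed lines), whereas you route everything through the observation that $\Phi_A=\mathrm{id}_A+\lambda d_A$ and $\Phi_B=\mathrm{id}_B+\lambda d_B$ are simultaneously algebra and Lie endomorphisms, that the definition of $d$ on $A\otimes B$ is exactly the one making $\mathrm{id}+\lambda d=\Phi_A\otimes\Phi_B$, and that $\Phi_A\otimes\Phi_B$ preserves both $\star$ and the tensor bracket because each tensor slot of those operations is one of $\cdot,\ast,\{\cdot,\cdot\}_A,\{\cdot,\cdot\}_B$. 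Cancelling the $\lambda^0$ term and dividing by $\lambda$ (legitimate in a field of characteristic $0$ once $\lambda\neq 0$) recovers \eqref{eq:lambda} and \eqref{eq:diff-Liealg} on $A\otimes B$; your separate remark disposing of $\lambda=0$ directly is needed and suffices, since a single fixed $\lambda$ is in play and no specialization argument is available. This buys a substantial reduction in bookkeeping and explains conceptually why the formula for $d(a\otimes b)$ must contain the $\lambda\,d_A(a)\otimes d_B(b)$ term, at the cost of an auxiliary lemma and a case split; the paper's expansion is longer but uniform in $\lambda$. For the purely Poisson part (antisymmetry, Jacobi, biderivation) you and the paper do the same classical computation; note that both arguments silently use commutativity of $\cdot$ and $\ast$ in the antisymmetry and Jacobi checks (the paper writes $aa'\otimes\{b,b'\}=-a'a\otimes\{b',b\}$), which is consistent with the Poisson setting intended here but is worth stating explicitly, as you do.
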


\begin{proof}
For simplicity, we omit $\cdot,\ast$ and all subscripts of $d_A,d_B,\{\cdot,\cdot\}_A,\{\cdot,\cdot\}_B$ without confusion.
For all $a\otimes b,a'\otimes b',a''\otimes b'' \in A\otimes B$, we have
\begin{eqnarray*}
  d\big((a\otimes b) \star (a'\otimes b')\big) &=& d(aa'\otimes bb') \\
  &=& d(aa')\otimes bb'+aa'\otimes d(bb')+\lambda d(aa')\otimes d(bb')\\
  &\overset{\eqref{eq:lambda}}{=}& \big(d(a)a'\otimes bb'+aa'\otimes d(b)b'+\lambda d(a)a'\otimes d(b)b'\big)\\
  &+& \big(ad(a')\otimes bb'+aa'\otimes bd(b')+\lambda ad(a')\otimes bd(b')\big)\\
  &+& \big(\lambda d(a)d(a')\otimes bb'+\lambda ad(a')\otimes d(b)b'+\lambda^2 d(a)d(a')\otimes d(b)b'\\
  &+& \lambda d(a)a'\otimes bd(b')+aa'\otimes\lambda d(b)d(b')+\lambda^2 d(a)a'\otimes d(b)d(b')\\
  &+& \lambda^2 d(a)d(a')\otimes bd(b')+\lambda^2 ad(a')\otimes d(b)d(b')+\lambda^3 d(a)d(a')\otimes d(b)d(b')\big)\\
  &=& d(a\otimes b)\star(a'\otimes b')+(a\otimes b)\star d(a'\otimes b')+\lambda d(a\otimes b)\star d(a'\otimes b'),
\end{eqnarray*}
thus, $(A\otimes B,\cdot,d)$ is a $\lambda$ differential algebra.

On the other hand, since $(A,\cdot,d_A)$ and $(B,\ast,d_B)$ are two $\lambda$-DL algebras, we obtain 
\begin{eqnarray*}
  \{a\otimes b,a'\otimes b'\} &=& aa'\otimes\{b,b'\}+\{a,a'\}\otimes bb'\\
  &=& -a'a\otimes\{b',b\}-\{a',a\}\otimes b'b=-\{a'\otimes b',a\otimes b\},
\end{eqnarray*}
\begin{eqnarray*}
  \{a\otimes b,\{a'\otimes b',a''\otimes b''\}\} &=& \{a\otimes b,a'a''\otimes\{b',b''\}\}+\{a\otimes b,\{a',a''\}\otimes b'b''\} \\
  &=& \{aa'\otimes\{b,b'\},a''\otimes b''\}+\{\{a,a'\}\otimes bb',a''\otimes b''\} \\
  &+& \{a'\otimes b',aa''\otimes \{b,b''\}\}+\{a'\otimes b',\{a,a''\}\otimes bb''\} \\
  &=& \{\{a\otimes b,a'\otimes b'\},a''\otimes b''\}+\{a'\otimes b',\{a\otimes b,a''\otimes b''\}\},
\end{eqnarray*}
and
\begin{eqnarray*}
  d\big(\{a\otimes b,a'\otimes b'\}\big) &=& d\big(aa'\otimes \{b,b'\}\big)+d\big(\{a,a'\}\otimes bb'\big) \\
  &=& d(aa')\otimes \{b,b'\}+aa'\otimes d\big(\{b,b'\}\big)+\lambda d(aa')\otimes d\big(\{b,b'\}\big) \\
  &+& d\big(\{a,a'\}\big)\otimes bb'+\{a,a'\}\otimes d(bb')+\lambda d\big(\{a,a'\}\big)\otimes d(bb')\\
  &\overset{\eqref{eq:lambda} \eqref{eq:diff-Liealg}}{=}& \big(d(a)a'\otimes \{b,b'\}+\{d(a),a'\}\otimes bb'\big)+\big(aa'\otimes\{d(b),b'\}+\{a,a'\}\otimes d(b)b'\big)\\
  &+&\big(\lambda d(a)a'\otimes\{d(b),b'\}+\{\lambda d(a),a'\}\otimes d(b)b'\big)\\
  &+&\big(ad(a')\otimes \{b,b'\}+\{a,d(a')\}\otimes bb'\big)+\big(aa'\otimes\{b,d(b')\}+\{a,a'\}\otimes bd(b')\big)\\
  &+&\big(a\lambda d(a')\otimes\{b,d(b')\}+\{a,\lambda d(a')\}\otimes bd(b')\big)\\
  &+&\lambda\big(d(a)d(a')\otimes \{b,b'\}+\{d(a),d(a')\}\otimes bb'\\
  &+&ad(a')\otimes\{d(b),b'\}+\{a,d(a')\}\otimes d(b)b'+\lambda d(a)d(a')\otimes\{d(b),b'\}\\
  &+&\{\lambda d(a),d(a')\}\otimes d(b)b'+d(a)a'\otimes\{b,d(b')\}+\{d(a),a'\}\otimes bd(b')\\
  &+&aa'\otimes\{d(b),d(b')\}+\{a,a'\}\otimes d(b)d(b')+\lambda d(a)a'\otimes\{d(b),d(b')\}\\
  &+&\{\lambda d(a),a'\}\otimes d(b)d(b')+d(a)\lambda d(a')\otimes\{b,d(b')\}+\{d(a),\lambda d(a')\}\otimes bd(b')\\
  &+&a\lambda d(a')\otimes\{d(b),d(b')\}+\{a,\lambda d(a')\}\otimes d(b)d(b')\\
  &+&\lambda d(a)\lambda d(a')\otimes\{d(b),d(b')\}+\{\lambda d(a),\lambda d(a')\}\otimes d(b)d(b')\big)\\
  &=&\{d(a)\otimes b,a'\otimes b'\}+\{a\otimes d(b),a'\otimes b'\}+\{\lambda d(a)\otimes d(b),a'\otimes b'\}\\
  &+&\{a\otimes b,d(a')\otimes b'\}+\{a\otimes b,a'\otimes d(b')\}+\{a\otimes b,\lambda d(a')\otimes d(b')\}\\
  &+&\lambda\{d(a)\otimes b+a\otimes d(b)+\lambda d(a)\otimes d(b),\\
  && d(a')\otimes b'+a'\otimes d(b')+\lambda d(a')\otimes d(b')\}\\
  &=&\{d(a\otimes b),a'\otimes b'\}+\{a\otimes b,d(a'\otimes b')\}+\lambda \{d(a\otimes b),d(a'\otimes b')\}.
\end{eqnarray*}
Hence $(A\otimes B,\{\cdot,\cdot\},d)$ is a $\lambda$-DL algebra.

Moreover, the Leibniz rule follows from
\begin{eqnarray*}
  \{a\otimes b,(a'\otimes b')\star(a''\otimes b'')\} &=& \{a\otimes b,a'a''\otimes b'b''\}\\
  &=&a(a'a'')\otimes\{b,b'b''\}+\{a,a'a''\}\otimes b(b'b'') \\
  &\overset{\eqref{eq:Leibniz-rule}}{=}& \big((aa')a''\otimes\{b,b'\}b''+\{a,a'\}a''\otimes (bb')b''\big)\\
  &+& \big(a'(aa'')\otimes b'\{b,b''\}+a'\{a,a''\}\otimes b'(bb'')\big)\\
  &=&\big(aa'\otimes\{b,b'\}+\{a,a'\}\otimes bb'\big)\star(a''\otimes b'')\\
  &+&(a'\otimes b')\star\big(aa''\otimes\{b,b''\}+\{a,a''\}\otimes bb'\big)\\
  &=&\{a\otimes b,a'\otimes b'\}\star(a''\otimes b'')+(a'\otimes b')\star\{a\otimes b,a''\otimes b''\}.
\end{eqnarray*}
The proof is completed.
\end{proof}

\begin{pro}\label{pro:mPoi-tensor-alg}
Let $(A,\cdot,\{\cdot,\cdot\}_A,d_A)$ and $(B,\ast,\{\cdot,\cdot\}_B,d_B)$ be two $\lambda$-mDP algebras, then
$(A\otimes B,\star,\{\cdot,\cdot\},d)$ is a $\lambda$-mDP algebra, where for all $a,a' \in A$ and $b,b' \in B$,
\begin{eqnarray*}
  (a\otimes b) \star (a'\otimes b') &:=& (a\cdot a')\otimes(b\ast b'), \\
  d(a\otimes b) &:=& d_A(a)\otimes b+a\otimes d_B(b)+\lambda a\otimes b,\\
  \{a\otimes b, a'\otimes b'\}&:=& (a\cdot a')\otimes\{b,b'\}_B+\{a,a'\}_A\otimes (b\ast b').
\end{eqnarray*}
\end{pro}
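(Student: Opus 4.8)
The plan is to exploit the fact that the multiplication $\star$ and the bracket $\{\cdot,\cdot\}$ on $A\otimes B$ are defined by exactly the same formulas as in Proposition \ref{pro:Poi-tensor-alg}; only the differential $d$ is altered, its third term now being the weight-$\lambda$ identity correction $\lambda\, a\otimes b$ in place of $\lambda\, d_A(a)\otimes d_B(b)$. Consequently, every axiom that does not involve $d$ — the antisymmetry and the Jacobi identity of $\{\cdot,\cdot\}$, together with the biderivation rule \eqref{eq:Leibniz-rule} relating the bracket to $\star$ — holds by the verbatim computations already performed in the proof of Proposition \ref{pro:Poi-tensor-alg}, since those computations never touch $d$. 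Thus only two things remain to check: that $(A\otimes B,\star,d)$ is a modified $\lambda$-differential algebra, and that $d$ satisfies the modified $\lambda$-differential Lie identity with respect to $\{\cdot,\cdot\}$.

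For the first, I would expand $d\big((a\otimes b)\star(a'\otimes b')\big)=d(aa'\otimes bb')$ using the definition of $d$ and then apply the modified Leibniz rule \eqref{eq:modi-lambda} to both $d_A(aa')$ and $d_B(bb')$. The one thing to track carefully is the accumulation of the weight-$\lambda$ correction terms: $d_A(aa')$ contributes one copy of $\lambda\, aa'\otimes bb'$, $d_B(bb')$ contributes a second, and the explicit third summand of $d$ contributes a third, so the left-hand side produces $3\lambda\, aa'\otimes bb'$. On the right-hand side the two products $d(a\otimes b)\star(a'\otimes b')$ and $(a\otimes b)\star d(a'\otimes b')$ each carry one correction, and $\lambda(a\otimes b)\star(a'\otimes b')$ supplies the third; the remaining four genuine derivative terms coincide on the two sides.

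For the second, I would expand $d\{a\otimes b,a'\otimes b'\}=d\big(aa'\otimes\{b,b'\}\big)+d\big(\{a,a'\}\otimes bb'\big)$, applying the definition of $d$ together with the modified rule \eqref{eq:modi-lambda} to $d_A(aa')$ and $d_B(bb')$ and the modified differential Lie identity to $d_A(\{a,a'\})$ and $d_B(\{b,b'\})$. Exactly as before, each of the two summands acquires three weight-$\lambda$ corrections, so the left-hand side gives $3\lambda\, aa'\otimes\{b,b'\}+3\lambda\,\{a,a'\}\otimes bb' = 3\lambda\{a\otimes b,a'\otimes b'\}$. On the right-hand side, bilinearity splits $\{d(a\otimes b),a'\otimes b'\}$ and $\{a\otimes b,d(a'\otimes b')\}$ into derivative parts plus one weight-$\lambda$ copy of the bracket apiece, while $\lambda\{a\otimes b,a'\otimes b'\}$ supplies the third copy; the eight derivative terms then reorganize into precisely the pattern produced on the left.

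The hard part here is purely the bookkeeping: in each of the two $d$-compatibility checks one must confirm that the weight-$\lambda$ corrections total exactly three on both sides, since a miscount would destroy the identity. The conceptual point that makes this transparent is that both the modified differential algebra rule \eqref{eq:modi-lambda} and the modified differential Lie identity carry the same ``$+\lambda(\text{product})$'' shape; this is exactly why the simpler term $\lambda\, a\otimes b$, rather than $\lambda\, d_A(a)\otimes d_B(b)$, is the correct weight correction for the tensor differential in the modified setting.
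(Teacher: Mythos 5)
Your proof is correct and follows the same route as the paper, which simply remarks that the result ``is straightforward by Proposition \ref{pro:Poi-tensor-alg}''; you supply the bookkeeping the paper omits, and your count of the weight-$\lambda$ corrections (four genuine derivative terms plus $3\lambda$ copies of the product, respectively eight derivative terms plus $3\lambda$ copies of the bracket, on each side) checks out.
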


\begin{proof}
It is straightforward by Proposition \ref{pro:Poi-tensor-alg}.
\end{proof}

\begin{pro}\label{pro:Poi-quotient-alg}
Let $(A,\cdot,\{\cdot,\cdot\},d)$ be a $\lambda$-DP (or $\lambda$-mDP) algebra, and $B\lhd A$. For all $a,b \in A$, we define a $\lambda$-DP (or $\lambda$-mDP) algebra structure on the quotient algebra $A/B$ as follows:
\begin{eqnarray*}
  (a+B) \cdot_{A/B} (b+B) &:=& a \cdot b+B,\\
  \{a+B,b+B\}_{A/B} &:=& \{a,b\}+B, \\
  d_{A/B}(a+B) &:=& d(a)+B.
\end{eqnarray*}
Then the quotient algebra $(A/B,\cdot_{A/B},\{\cdot,\cdot\}_{A/B},d_{A/B})$ is a $\lambda$-DP (or $\lambda$-mDP) algebra.
\end{pro}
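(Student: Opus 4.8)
The plan is to proceed in two stages: first establish that the three quotient operations $\cdot_{A/B}$, $\{\cdot,\cdot\}_{A/B}$ and $d_{A/B}$ are well-defined on cosets, and then observe that every defining identity of a $\lambda$-DP algebra descends automatically from $(A,\cdot,\{\cdot,\cdot\},d)$ to the quotient, because the canonical projection $\pi\colon A\to A/B$, $\pi(a)=a+B$, is a surjective linear map intertwining all three operations by construction.

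For well-definedness, I would suppose $a+B=a'+B$ and $b+B=b'+B$, i.e. $a-a'\in B$ and $b-b'\in B$, and for the product write
\begin{equation*}
 a\cdot b-a'\cdot b'=(a-a')\cdot b+a'\cdot(b-b'),
\end{equation*}
which lies in $B$ because $B\cdot A\subseteq B$ and $A\cdot B\subseteq B$; hence $a\cdot b+B=a'\cdot b'+B$. The same splitting applied to the bracket, now invoking $\{B,A\}\subseteq B$ and $\{A,B\}\subseteq B$, yields $\{a,b\}+B=\{a',b'\}+B$, and for the differential, linearity of $d$ together with $d(B)\subseteq B$ gives $d(a)-d(a')=d(a-a')\in B$, so $d(a)+B=d(a')+B$. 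These are precisely the five containments packaged into the definition of $B\lhd A$, and this is the only place where the ideal hypotheses enter.

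Once the operations are well-defined, I would note that $\pi$ satisfies $\pi(a\cdot b)=\pi(a)\cdot_{A/B}\pi(b)$, $\pi(\{a,b\})=\{\pi(a),\pi(b)\}_{A/B}$ and $\pi\circ d=d_{A/B}\circ\pi$, and that it is surjective. Therefore each axiom of a $\lambda$-DP algebra, being a universally quantified identity in these operations, transfers from $A$ to $A/B$ simply by applying $\pi$ to its instance in $A$ and using surjectivity to range over all cosets: associativity of $\cdot_{A/B}$, the weighted Leibniz rule \eqref{eq:lambda} for $d_{A/B}$ (with $d_{A/B}(1_A+B)=d(1_A)+B=0$), the Lie algebra axioms together with the compatibility \eqref{eq:diff-Liealg} for $\{\cdot,\cdot\}_{A/B}$, and finally the biderivation identity \eqref{eq:Leibniz-rule}. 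This shows $(A/B,\cdot_{A/B},\{\cdot,\cdot\}_{A/B},d_{A/B})$ is a $\lambda$-DP algebra. For the modified case the argument is verbatim the same, with \eqref{eq:modi-lambda} replacing \eqref{eq:lambda} as the relevant differential identity.

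The main obstacle is genuinely only the well-definedness check, and even that is immediate once the five ideal containments are recorded; the descent of the algebraic axioms through the surjection $\pi$ is purely formal and carries no computational content, which is why I would not spell out each identity separately.
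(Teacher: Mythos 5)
Your proposal is correct and follows the same basic strategy as the paper: the quotient operations are the obvious ones and every axiom of a $\lambda$-DP (or $\lambda$-mDP) algebra descends through the canonical surjection. In fact your write-up is more complete on the one point where the hypothesis $B\lhd A$ actually does work, namely the well-definedness of $\cdot_{A/B}$, $\{\cdot,\cdot\}_{A/B}$ and $d_{A/B}$ via the splitting $a\cdot b-a'\cdot b'=(a-a')\cdot b+a'\cdot(b-b')$ and its analogues; the paper's proof omits this check and instead writes out the coset computations for the weighted Leibniz rules and the biderivation identity, which your formal descent argument subsumes.
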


\begin{proof}
For all $a, b \in A$, we have 
\begin{eqnarray*}
  d_{A/B}\big((a+B) \cdot_{A/B} (b+B)\big) &=& d_{A/B}(a\cdot b+B)=d(a\cdot b)+B\\
  &\overset{\eqref{eq:lambda}}{=}&d(a)\cdot b+a\cdot d(b)+\lambda d(a)\cdot d(b)+B\\
  &=& d_{A/B}(a+B)\cdot_{A/B}(b+B)+(a+B)\cdot_{A/B}d_{A/B}(b+B)\\
  &+&\lambda d_{A/B}(a+B)\cdot_{A/B}d_{A/B}(b+B),
\end{eqnarray*}
and
\begin{eqnarray*}
  d_{A/B}\big(\{a+B,b+B\}_{A/B}\big) &=& d_{A/B}\big(\{a,b\}+B\big)=d\big(\{a,b\}\big)+B \\
  &\overset{\eqref{eq:diff-Liealg}}{=}& \{d(a),b\}+\{a,d(b)\}+\lambda\{d(a),d(b)\}+B \\
  &=& \{d_{A/B}(a+B),b+B\}_{A/B}+\{a+B,d_{A/B}(b+B)\}_{A/B}\\
  &+&\lambda\{d_{A/B}(a+B),d_{A/B}(b+B)\}_{A/B}.
\end{eqnarray*}
Furthermore, let $c \in A$, we have
\begin{eqnarray*}
  \{a+B,(b+B)\cdot_{A/B}(c+B)\}_{A/B} &=& \{a+B,b\cdot c+B\}_{A/B}=\{a,b\cdot c\}+B\\
  &\overset{\eqref{eq:Leibniz-rule}}{=}&\{a,b\}\cdot c+b\cdot\{a,c\}+B \\
  &=&  \{a+B,b+B\}_{A/B}\cdot_{A/B}(c+B)+(b+B)\cdot_{A/B}\{a+B,c+B\}_{A/B}.
\end{eqnarray*}
The modified differential case can be proved similarly. 
\end{proof}

\begin{pro}\label{pro:i-iii}
Let $(A,\cdot_A,\{\cdot,\cdot\}_A,d_A)$ and $(B,\cdot_B,\{\cdot,\cdot\}_B,d_B)$ be two $\lambda$-DP (or $\lambda$-mDP) algebras and $f:A\rightarrow B$ be a $\lambda$-DP (or $\lambda$-mDP) algebra homomorphism. Then we have
\begin{description}
  \item[(i)] \label{i}$\mathrm{ker}f\lhd A,$
  \item[(ii)] $\mathrm{im}f<B,$
  \item[(iii)] $A/\mathrm{ker}f\cong \mathrm{im}f$ as $\lambda$-DP (or $\lambda$-mDP) algebras$.$
\end{description}

\end{pro}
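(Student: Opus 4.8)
The plan is to prove (i) and (ii) by directly verifying the defining closure conditions for an ideal and a subalgebra, and then to obtain (iii) by passing to the canonically induced map on the quotient, which is available once (i) supplies the quotient structure of Proposition \ref{pro:Poi-quotient-alg}. Accordingly the three parts must be carried out in this order.

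For (i) I would check each of the five containments in the definition of a $\lambda$-DP (or $\lambda$-mDP) ideal. Since $f$ is linear, $\mathrm{ker} f$ is a subspace, and since $f$ is an algebra homomorphism it is a subalgebra of $(A,\cdot_A)$. For $x\in\mathrm{ker} f$ the compatibility $f\circ d_A=d_B\circ f$ gives $f(d_A(x))=d_B(f(x))=0$, so $d_A(x)\in\mathrm{ker} f$. For $a\in A$ and $x\in\mathrm{ker} f$, bilinearity together with $f(\{a,x\}_A)=\{f(a),f(x)\}_B=\{f(a),0\}_B=0$ gives $\{a,x\}_A,\{x,a\}_A\in\mathrm{ker} f$; replacing $\{\cdot,\cdot\}_A$ by $\cdot_A$ gives $a\cdot_A x,\,x\cdot_A a\in\mathrm{ker} f$. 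Hence $\mathrm{ker} f\lhd A$. For (ii), for any $f(a),f(b)\in\mathrm{im} f$ the homomorphism property yields $f(a)\cdot_B f(b)=f(a\cdot_A b)$, $\{f(a),f(b)\}_B=f(\{a,b\}_A)$ and $d_B(f(a))=f(d_A(a))$, all lying in $\mathrm{im} f$; thus $\mathrm{im} f$ is closed under product, bracket and $d_B$, so $\mathrm{im} f<B$.

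For (iii), by (i) and Proposition \ref{pro:Poi-quotient-alg} the quotient $A/\mathrm{ker} f$ carries a $\lambda$-DP (or $\lambda$-mDP) algebra structure. I would then define $\bar f\colon A/\mathrm{ker} f\to\mathrm{im} f$ by $\bar f(a+\mathrm{ker} f):=f(a)$ and verify: well-definedness and injectivity (both equivalent to $a-a'\in\mathrm{ker} f\iff f(a)=f(a')$), surjectivity onto $\mathrm{im} f$ (immediate), and compatibility with all four operations, e.g. $\bar f(\{a+\mathrm{ker} f,\,b+\mathrm{ker} f\}_{A/\mathrm{ker} f})=f(\{a,b\}_A)=\{f(a),f(b)\}_B$, and analogously for $\cdot$ and $d$ using the quotient formulas of Proposition \ref{pro:Poi-quotient-alg}. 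This exhibits $\bar f$ as a $\lambda$-DP (or $\lambda$-mDP) algebra isomorphism.

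The argument is routine, with no deep obstacle; the only genuine care needed is respecting the ordering (the quotient in (iii) is legitimate only after (i)) and checking all four structure maps — the associative product, the Lie bracket, the differential, and the Leibniz compatibility — rather than stopping after the purely Lie or purely associative verifications. The modified case is handled identically, the weighted Leibniz rule being replaced by \eqref{eq:modi-lambda}, which does not affect any of the kernel or image closure computations.
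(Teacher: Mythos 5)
Your proposal is correct and follows essentially the same route as the paper: both verify the ideal/subalgebra closure conditions for $\ker f$ and $\mathrm{im}f$ directly from $f\circ d_A=d_B\circ f$ and the compatibility of $f$ with the product and bracket, and both obtain (iii) by invoking Proposition \ref{pro:Poi-quotient-alg} together with the induced map $a+\ker f\mapsto f(a)$ and the classical first isomorphism theorem. No gaps to report.
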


\begin{proof}
(1)~Since $f\big(d_A(\mathrm{ker}f)\big)=d_B(f(\mathrm{ker}f))=0$, we have $d_A(\mathrm{ker}f) \subseteq \mathrm{ker}f$. 

Note
\begin{equation*}
  f(a \cdot_A \mathrm{ker}f)=f(a) \cdot_B f(\mathrm{ker}f)=0,~~f(\mathrm{ker}f \cdot_A a)=f(\mathrm{ker}f) \cdot_B f(a)=0,\quad\forall a \in A.
\end{equation*}
Thus, we have $a \cdot_A \mathrm{ker}f, \mathrm{ker}f \cdot_A a \subseteq \mathrm{ker}f$.

Furthermore, we have 
\begin{equation*}
  f\big(\{a,\mathrm{ker}f\}_A\big)=\{f(a),f(\mathrm{ker}f)\}_B=0,~~f\big(\{\mathrm{ker}f,a\}_A\big)=\{f(\mathrm{ker}f),f(a)\}_B=0,\quad\forall a \in A.
\end{equation*}
Therefore, we derive $\{a,\mathrm{ker}f\}_A, \{\mathrm{ker}f,a\}_A \subseteq \mathrm{ker}f$.

(2)~For $y \in \mathrm{im}f$, there exists some $a \in A$ such that $f(a)=y$. Then $d_B(y)=d_B\big(f(a)\big)=f\big(d_A(a)\big) \in \mathrm{im}f$. Similarly, for $z \in \mathrm{im}f$, we have $f(b)=z$ for some $b \in A$. Thus, we have $\{y,z\}_B=\{f(a),f(b)\}_B=f\big(\{a,b\}_A\big) \in \mathrm{im}f$.

(3)~By Proposition \ref{pro:Poi-quotient-alg} and (i), we obtain $A/\mathrm{ker}f$ is a $\lambda$-DP ($\lambda$-mDP) algebra. Define a linear map
\begin{equation*}
  \phi:A/\mathrm{ker}f\rightarrow \mathrm{im}f \quad by
\end{equation*}
\begin{equation*}
  a+\mathrm{ker}f\mapsto f(a),\quad\forall a \in A.
\end{equation*}
By the first isomorphism theorem, we obtain that $\phi$ is a $\lambda$-DP ($\lambda$-mDP) algebra isomorphism. For all $a,b \in A$, we have
\begin{eqnarray*}
  \phi\big(d_{A/\mathrm{ker}f}(a+\mathrm{ker}f)\big) &=& f\big(d_{A/\mathrm{ker}f}(a+\mathrm{ker}f)\big)=f\big(d_A(a)+\mathrm{ker}f\big) \\
  &=& f\big(d_A(a)\big)=d_B\big(f(a)\big)=d_B\big(\phi(a+\mathrm{ker}f)\big).
\end{eqnarray*}
Furthermore, we have
\begin{eqnarray*}
  \phi\big(\{a+\mathrm{ker}f,b+\mathrm{ker}f\}_{A/\mathrm{ker}f}\big) &=& f\big(\{a+\mathrm{ker}f,b+\mathrm{ker}f\}_{A/\mathrm{ker}f}\big)=f\big(\{a,b\}_A+\mathrm{ker}f\big) \\
  &=& f\big(\{a,b\}_A\big)=\{f(a),f(b)\}_B=\{\phi(a+\mathrm{ker}f),\phi(b+\mathrm{ker}f)\}_B.
\end{eqnarray*}
Thus, we deduce $\phi$ is compatible with $d_{A/\mathrm{ker}f}$ and $\{\cdot,\cdot\}_{A/\mathrm{ker}f}$.
\end{proof}

Recall that a complex \cite{homology} in an abelian category $\mathcal{A}$ is a sequence of morphisms (called differentials),
\begin{equation*}
  (\mathbf{C_{\bullet}},d_{\bullet})=\longrightarrow A_{n+1}\stackrel{d_{n+1}}{\longrightarrow} A_n\stackrel{d_n}{\longrightarrow} A_{n-1}\longrightarrow,
\end{equation*}
such that 
\begin{equation*}
  d_nd_{n+1}=0, \quad\forall n \in \mathbb{Z}.
\end{equation*}
We write $\mathbf{C_{\bullet}}$ instead of $(\mathbf{C_{\bullet}},d_{\bullet})$ briefly and the category of all complexes in $\mathcal{A}$ is denoted by $\mathbf{Comp}(\mathcal{A})$. If $\mathbf{C_{\bullet}}$ is a complex in $\mathbf{Comp}(\mathcal{A})$ and $n \in \mathbb{Z}$, then its $n$-th homology is
\begin{equation*}
  \mathcal{H}_n(\mathbf{C_{\bullet}})=\mathrm{ker}d_n/\mathrm{im}d_{n+1}.
\end{equation*}

\begin{pro}\label{pro:coho-alg}
Let $(A,\cdot,\{\cdot,\cdot\},d)$ be a $\lambda$-DP (or $\lambda$-mDP) algebra. For all $x,y\in A$, we define a $\lambda$-DP (or $\lambda$-mDP) algebra structure on cohomology algebra $\big(\mathcal{H}(A),\star,\{\cdot,\cdot\}_{\mathcal{H}(A)},d_{\mathcal{H}(A)}\big)$ of $(A,\cdot,\{\cdot,\cdot\},d)$ by
\begin{eqnarray*}
  (x+\mathrm{im}d) \star (y+\mathrm{im}d) &:=& x\cdot y+\mathrm{im}d, \\
   \{x+\mathrm{im}d,y+\mathrm{im}d\}_{\mathcal{H}(A)}&:=&\{x,y\}+\mathrm{im}d,\\
  d_{\mathcal{H}(A)}(x+\mathrm{im}d) &:=& d(x)+\mathrm{im}d.
\end{eqnarray*}
Moreover, if $(B,\cdot_B,\{\cdot,\cdot\}_B,d_B)$ is another $\lambda$-DP (or $\lambda$-mDP) algebra such that $A\cong B$ as $\lambda$-DP (or $\lambda$-mDP) algebras, then $\mathcal{H}(A)\cong \mathcal{H}(B)$ as $\lambda$-DP (or $\lambda$-mDP) algebras.

\end{pro}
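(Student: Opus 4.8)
The plan is to realise $\mathcal{H}(A)$ as a quotient of a $\lambda$-DP subalgebra by a $\lambda$-DP ideal, so that the first assertion collapses onto the quotient construction already established in Proposition~\ref{pro:Poi-quotient-alg}. First I would record that $\mathrm{ker}d$ is a $\lambda$-DP subalgebra of $A$: if $d(x)=d(y)=0$, then the weighted Leibniz rule \eqref{eq:lambda} and the $\lambda$-DL identity \eqref{eq:diff-Liealg} give $d(x\cdot y)=0$ and $d(\{x,y\})=0$, since every summand on the right carries a factor $d(x)$ or $d(y)$. I would then represent cohomology classes by cocycles, i.e. read $\mathcal{H}(A)$ as $\mathrm{ker}d/\mathrm{im}d$ in line with the recollection $\mathcal{H}_n(\mathbf{C_{\bullet}})=\mathrm{ker}d_n/\mathrm{im}d_{n+1}$ above; this is exactly what makes the induced operations well-defined.

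The heart of the argument is to show that $\mathrm{im}d$ is a $\lambda$-DP ideal of $\mathrm{ker}d$, which is precisely the well-definedness of $\star$ and $\{\cdot,\cdot\}_{\mathcal{H}(A)}$. Here the weighted setting is tamed by the crucial simplification that, for a cocycle $x$ (so $d(x)=0$) and an arbitrary $u$, the weight-$\lambda$ cross terms vanish: from \eqref{eq:lambda} one gets $x\cdot d(u)=d(x\cdot u)$ and $d(u)\cdot x=d(u\cdot x)$, and from \eqref{eq:diff-Liealg} one gets $\{x,d(u)\}=d(\{x,u\})$ and $\{d(u),x\}=d(\{u,x\})$, so all four lie in $\mathrm{im}d$. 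Together with $d(\mathrm{im}d)\subseteq\mathrm{im}d$ this yields $\mathrm{im}d\lhd\mathrm{ker}d$, and Proposition~\ref{pro:Poi-quotient-alg} (applied with $\mathrm{ker}d$ in place of $A$) then endows $\mathcal{H}(A)=\mathrm{ker}d/\mathrm{im}d$ with the asserted structure. I would also note that the induced differential is automatically zero, since $d(x)\in\mathrm{im}d$ forces $d_{\mathcal{H}(A)}(x+\mathrm{im}d)=\mathrm{im}d$; consequently every axiom involving $d_{\mathcal{H}(A)}$ holds trivially, and only graded antisymmetry, the Jacobi identity and the Leibniz rule \eqref{eq:Leibniz-rule} need transporting from $A$, which they do verbatim on representatives.

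For the second assertion, let $\varphi:A\to B$ be a $\lambda$-DP isomorphism. Since $\varphi\circ d_A=d_B\circ\varphi$, the map $\varphi$ carries $\mathrm{ker}d_A$ onto $\mathrm{ker}d_B$ and $\mathrm{im}d_A$ onto $\mathrm{im}d_B$, hence descends to a bijection $\bar{\varphi}:\mathcal{H}(A)\to\mathcal{H}(B)$, $x+\mathrm{im}d_A\mapsto\varphi(x)+\mathrm{im}d_B$. Because $\varphi$ is multiplicative, bracket-preserving and commutes with the differentials, $\bar{\varphi}$ respects $\star$, $\{\cdot,\cdot\}_{\mathcal{H}(A)}$ and $d_{\mathcal{H}(A)}$ by evaluation on representatives, giving $\mathcal{H}(A)\cong\mathcal{H}(B)$.

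I expect the main obstacle to be exactly the well-definedness in the middle step. The naive quotient $A/\mathrm{im}d$ need not inherit the operations, since $\mathrm{im}d$ fails to absorb products against arbitrary elements; it is therefore essential to quotient inside $\mathrm{ker}d$ and to exploit $d(x)=0$ to annihilate the weight-$\lambda$ terms. For the cohomology itself to be defined one moreover needs $d$ to be a genuine differential, i.e. $d^2=0$ so that $\mathrm{im}d\subseteq\mathrm{ker}d$; I would either assume this (as the word \emph{cohomology} suggests) or restrict to that subclass. The same care is needed in the $\lambda$-mDP case, where the modified rules \eqref{eq:modi-lambda} leave a residual weight term of the form $\lambda\,x\cdot u$ even on cocycles, so one must verify rather than merely assert by analogy that the construction still descends.
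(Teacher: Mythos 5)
Your argument is correct for the $\lambda$-DP case, but it takes a genuinely different route from the paper. The paper's proof simply evaluates the three operations on representatives and checks the weighted Leibniz rule \eqref{eq:lambda}, the identity \eqref{eq:diff-Liealg}, antisymmetry, Jacobi and \eqref{eq:Leibniz-rule} term by term; it never discusses whether the operations on $\mathcal{H}(A)$ are well defined, nor does it say explicitly that one must work in $\ker d/\mathrm{im}\,d$ rather than $A/\mathrm{im}\,d$. You instead reduce the whole first assertion to Proposition~\ref{pro:Poi-quotient-alg} by showing $\ker d < A$ and $\mathrm{im}\,d \lhd \ker d$, and your observation that the weight-$\lambda$ cross terms vanish on cocycles (so that $x\cdot d(u)=d(x\cdot u)$ and $\{x,d(u)\}=d(\{x,u\})$ when $d(x)=0$) is precisely the well-definedness argument the paper omits; this is the more structurally honest proof, at the cost of needing $d^2=0$ up front so that $\mathrm{im}\,d\subseteq\ker d$ (an assumption the statement's use of the word ``cohomology'' makes implicitly, as does the paper's recollection of $\mathcal{H}_n=\ker d_n/\mathrm{im}\,d_{n+1}$). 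Your closing caveat about the modified case is also well taken and goes beyond the paper: there the residual term $\lambda\,x\cdot u$ in \eqref{eq:modi-lambda} means $x\cdot d'(u)=d'(x\cdot u)-\lambda\,x\cdot u$ need not lie in $\mathrm{im}\,d'$, and $d'$ need not square to zero, so the paper's ``proved similarly'' is not automatic; if you pursue this, you should either impose the hypotheses under which the modified construction descends or restrict the modified half of the statement accordingly. The second assertion is handled essentially as in the paper.
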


\begin{proof}
By \eqref{eq:lambda}, for all $x,y \in A$, we have
\begin{eqnarray*}
  d_{\mathcal{H}(A)}\big((x+\mathrm{im}d)\star (y+\mathrm{im}d)\big) &=& d_{\mathcal{H}(A)}(x\cdot y+\mathrm{im}d)=d(x\cdot y)+\mathrm{im}d \\
  &\overset{\eqref{eq:lambda}}{=}& d(x)\cdot y+x\cdot d(y)+\lambda d(x)\cdot d(y)+\mathrm{im}d \\
  &=& \big(d(x)+\mathrm{im}d\big) \star (y+\mathrm{im}d)+(x+\mathrm{im}d)\star \big(d(y)+\mathrm{im}d\big)\\
  &+& \lambda\big(d(x)+\mathrm{im}d\big)\star\big(d(y)+\mathrm{im}d\big)\\
  &=& \big(d_{\mathcal{H}(A)}(x+\mathrm{im}d)\big)\star (y+\mathrm{im}d)+(x+\mathrm{im}d)\star \big(d_{\mathcal{H}(A)}(y+\mathrm{im}d)\big)\\
  &+& \lambda\big(d_{\mathcal{H}(A)}(x+\mathrm{im}d)\big)\star\big(d_{\mathcal{H}(A)}(y+\mathrm{im}d)\big).
\end{eqnarray*}
Thus, $(\mathcal{H}(A),\star,d_{\mathcal{H}(A)})$ is a $\lambda$-differential algebra. 

Since $(A,\{\cdot,\cdot\})$ is a Lie algebra, we have
\begin{equation*}
  \{x+\mathrm{im}d,y+\mathrm{im}d\}_{\mathcal{H}(A)} = \{x,y\}+\mathrm{im}d=-\{y,x\}+\mathrm{im}d=-\{y+\mathrm{im}d,x+\mathrm{im}d\}_{\mathcal{H}(A)},
\end{equation*}
and
\begin{eqnarray*}
  \{x+\mathrm{im}d,\{y+\mathrm{im}d,z+\mathrm{im}d\}_{\mathcal{H}(A)}\}_{\mathcal{H}(A)} &=& \{x+\mathrm{im}d,\{y,z\}+\mathrm{im}d\}_{\mathcal{H}(A)} \\
  &=& \{x,\{y,z\}\}+\mathrm{im}d=\{\{x,y\},z\}+\{y,\{x,z\}\}+\mathrm{im}d \\
  &=& \{\{x,y\}+\mathrm{im}d,z+\mathrm{im}d\}_{\mathcal{H}(A)}+\{y+\mathrm{im}d,\{x,z\}+\mathrm{im}d\}_{\mathcal{H}(A)} \\
  &=& \{\{x+\mathrm{im}d,y+\mathrm{im}d\}_{\mathcal{H}(A)},z+\mathrm{im}d\}_{\mathcal{H}(A)}\\
  &+&\{y+\mathrm{im}d,\{x+\mathrm{im}d,z+\mathrm{im}d\}_{\mathcal{H}(A)}\}_{\mathcal{H}(A)}.
\end{eqnarray*}
Furthermore, by \eqref{eq:diff-Liealg}, the following equation holds.
\begin{eqnarray*}
  d_{\mathcal{H}(A)}\big(\{x+\mathrm{im}d,y+\mathrm{im}d\}_{\mathcal{H}(A)}\big) &=& d_{\mathcal{H}(A)}\big(\{x,y\}+\mathrm{im}d\big)=d\big(\{x,y\}\big)+\mathrm{im}d \\
  &\overset{\eqref{eq:diff-Liealg}}{=}& \{d(x),y\}+\{x,d(y)\}+\lambda\{d(x),d(y)\}+\mathrm{im}d \\
  &=& \{d(x)+\mathrm{im}d,y+\mathrm{im}d\}_{\mathcal{H}(A)}+\{x+\mathrm{im}d,d(y)+\mathrm{im}d\}_{\mathcal{H}(A)}\\
  &+& \lambda\{d(x)+\mathrm{im}d,d(y)+\mathrm{im}d\}_{\mathcal{H}(A)}\\
  &=& \{d_{\mathcal{H}(A)}(x+\mathrm{im}d),y+\mathrm{im}d\}_{\mathcal{H}(A)}
  +\{x+\mathrm{im}d,d_{\mathcal{H}(A)}(y+\mathrm{im}d)\}_{\mathcal{H}(A)}\\
  &+& \lambda\{d_{\mathcal{H}(A)}(x+\mathrm{im}d),d_{\mathcal{H}(A)}(y+\mathrm{im}d)\}_{\mathcal{H}(A)}.
\end{eqnarray*}
Applying \eqref{eq:Leibniz-rule} to $(A,\cdot,\{\cdot,\cdot\})$ gives
\begin{eqnarray*}
  \{x+\mathrm{im}d,(y+\mathrm{im}d)\star(z+\mathrm{im}d)\}_{\mathcal{H}(A)} &=& \{x+\mathrm{im}d,y\cdot z+\mathrm{im}d\}_{\mathcal{H}(A)} \\
  &=& \{x,y\cdot z\}+\mathrm{im}d\overset{\eqref{eq:Leibniz-rule}}{=}\{x,y\}\cdot z+y\cdot\{x,z\}+\mathrm{im}d\\
  &=& \big(\{x,y\}+\mathrm{im}d\big)\star (z+\mathrm{im}d)+(y+\mathrm{im}d)\star\big(\{x,z\}+\mathrm{im}d\big)\\
  &=& \{x+\mathrm{im}d,y+\mathrm{im}d\}_{\mathcal{H}(A)}\star (z+\mathrm{im}d)\\
  &+& (y+\mathrm{im}d)\star\{x+\mathrm{im}d,z+\mathrm{im}d\}_{\mathcal{H}(A)}.
\end{eqnarray*}
Therefore, $(\mathcal{H}(A),\star,\{\cdot,\cdot\}_\mathcal{H}(A),d_\mathcal{H}(A))$ is a $\lambda$-DP algebra.

On the other hand, let $f:A\rightarrow B$ be a $\lambda$-DP algebra isomorphism. Define a linear map $\widetilde{f}:\mathcal{H}(A)\rightarrow \mathcal{H}(B)$ by
\begin{equation*}
  \widetilde{f}(a+\mathrm{im}d_A)=f(a)+\mathrm{im}d_B, \quad\forall a+\mathrm{im}d_A \in \mathcal{H}(A).
\end{equation*}
By the definitions of a $\lambda$-DP algebra isomorphism, we immediately obtain that such $\widetilde{f}$ is well defined and is a bijective $\lambda$-DP algebra homomorphism. 

The modified differential case can be proved similarly. Hence the conclusion follows.

\end{proof}

\subsection{The modules of $\lambda$-DP algebras and $\lambda$-mDP algebras}\label{subsec-DP module}
$\\$

Let $(A,[\cdot,\cdot],d)$ be a $\lambda$-DL algebra. A left $\lambda$-differential Lie $A$-module \cite{Unen-dLie} is a triple $(M,[\cdot,\cdot]_M,d_M)$ if $M$ is a left $A$-module, $[\cdot,\cdot]_M: A\otimes M\rightarrow M$ and $d_M:M\rightarrow M$ are two linear maps such that
\begin{eqnarray}
  \label{eq:diff-Liemodu}d_M([a,m]_M)&=&[d(a),m]_M+[a,d_M(m)]_M+\lambda[d(a),d_M(m)]_M,\\
  \nonumber[a,[b,m]_M]_M &=&[[a,b],m]_M+[b,[a,m]_M]_M,\quad\forall a,b \in A,~m \in M.
\end{eqnarray}
If \eqref{eq:diff-Liemodu} is substituted for
\begin{equation*}\label{eq:mo-diff-Liemodu}
  d_M([a,m]_M)=[d(a),m]_M+[a,d_M(m)]_M+\lambda[a,m]_M,\quad\forall a,b \in A,~m \in M,
\end{equation*}
then the triple $(M,[\cdot,\cdot]_M,d_M)$ is called a left modified $\lambda$-differential Lie $A$-module. 

Right (modified) $\lambda$-differential Lie $A$-modules can be defined in the similar manner. We denote left (resp. right) $\lambda$-differential Lie $A$-module by {\bf $A_{\lambda-DL}$} (resp. {\bf $_{\lambda-DL}A$}). For left (resp. right) modified $\lambda$-differential Lie $A$-module, we denote by {\bf $A_{\lambda-mDL}$} (resp. {\bf $_{\lambda-mDL}A$}).

\begin{defi}\label{defi:diff-Poimodule}
Let $(A,\cdot,\{\cdot,\cdot\},d)$ be a $\lambda$-DP algebra. A left {\bf $\lambda$-differential Poisson $A$-module} is a quadruple $(M,\ast,\{\cdot,\cdot\}_M,d_M)$ if $(M,\ast,d_M)$ is a left $\lambda$-differential $A$-module, $(M,\{\cdot,\cdot\}_M,d_M) \in A_{\lambda-DL}$ such that for any $a,b \in A, m \in M$,
\begin{eqnarray}
  \label{eq:diff-Poimodule1}\{a,b\ast m\}_M &=& \{a,b\}\ast m+b\ast\{a,m\}_M, \\
  \label{eq:diff-Poimodule2}\{a\cdot b,m\}_M &=& a\ast\{b,m\}_M+b\ast\{a,m\}_M.
\end{eqnarray}

\end{defi}
If $(A,\cdot,\{\cdot,\cdot\},d)$ is a $\lambda$-mDP algebra, $(M,\ast,d_M)$ is a left modified $\lambda$-differential $A$-module, \\
$(M,\{\cdot,\cdot\}_M,d_M) \in A_{\lambda-mDL}$, then the quadruple $(M,\ast,\{\cdot,\cdot\}_M,d_M)$ satisfying \eqref{eq:diff-Poimodule1} \eqref{eq:diff-Poimodule2} is a left {\bf modified $\lambda$-differential Poisson $A$-module}. 

For abbreviation, we denote left (resp. right) $\lambda$-differential Poisson $A$-module by {\bf $A_{\lambda-DP}$} (resp. {\bf $_{\lambda-DP}A$}), and denote left (resp. right) modified $\lambda$-differential Poisson $A$-module by {\bf $A_{\lambda-mDP}$} (resp. {\bf $_{\lambda-mDP}A$}).

\begin{defi}
Let $(A,\cdot,\{\cdot,\cdot\},d)$ be a $\lambda$-DP (or $\lambda$-mDP) algebra, $(M,\ast,\{\cdot,\cdot\}_M,d_M) \in A_{\lambda-DP} \\ (or~A_{\lambda-mDP})$ and $(N,\ast)$ be a left $A$-sub-module of $(M,\ast)$. If $d_M(N)\subseteq N$ and $\{A,N\}_M\subseteq N$, 
then $(N,\ast,\{\cdot,\cdot\}_M,d_M)$ is called a left $\lambda$-DP (or $\lambda$-mDP) $A$-sub-module of $(M,\ast,\{\cdot,\cdot\}_M,d_M)$, 
denoted by $N\leq_{\mathcal{P}} M$.
\end{defi}
Right $\lambda$-DP (or $\lambda$-mDP) $A$-sub-modules can be defined in the similar manner. For any $\lambda$-DP (or $\lambda$-mDP) $A$-module $M$, we have $0\leq_{\mathcal{P}} M$ and $M\leq_{\mathcal{P}} M$. If $0$ and $M$ are the only $\lambda$-DP (or $\lambda$-mDP) $A$-sub-modules of $M$, then we call $M$ a simple $\lambda$-DP (or $\lambda$-mDP) $A$-module.

\begin{defi}
Let $(A,\cdot,\{\cdot,\cdot\},d)$ be a $\lambda$-DP (or $\lambda$-mDP) algebra, and $(M,\ast_M,\{\cdot,\cdot\}_M,d_M), \\
(N,\ast_N,\{\cdot,\cdot\}_N,d_N) \in A_{\lambda-DP}~(or~A_{\lambda-mDP})$. A left $A$-module homomorphism  $f:M\rightarrow N$ is called a left $\lambda$-DP (or $\lambda$-mDP) $A$-module {\bf homomorphism} if $f(\{a,m\}_M)=\{a,f(m)\}_N$ and $f\circ d_M=d_N\circ f$ for all $a \in A, m \in M$. Moreover, $f$ is called a left $\lambda$-DP (or $\lambda$-mDP) $A$-module {\bf monomorphism} if $\mathrm{Ker}f=0$, a left $\lambda$-DP (or $\lambda$-mDP) $A$-module {\bf epimorphism} if $\mathrm{Im}f=N$, a left $\lambda$-DP (or $\lambda$-mDP) $A$-module {\bf isomorphism} if it is a bijection, denoted by $M\cong N$.
\end{defi}

Propositions \ref{pro:oppo-module} $\sim$ \ref{pro:coho-module} are straightforward applications of Propositions \ref{pro:oppo-alg} $\sim$ \ref{pro:coho-alg}.

\begin{pro}\label{pro:oppo-module}
Let $(A,\cdot,\{\cdot,\cdot\},d)$ be a $\lambda$-DP (or $\lambda$-mDP) algebra, and $(M,\ast,\{\cdot,\cdot\}_M,d_M) \in A_{\lambda-DP} \\ (or~A_{\lambda-mDP})$. Then $(M,\ast^{op},\{\cdot,\cdot\}_M^{op},d_M^{op}) \in _{\lambda-DP}A~(or~_{\lambda-mDP}A)$, where for all $a \in A, m \in M$,
\begin{eqnarray*}
  m\ast^{op}a &:=& a\ast m, \\
  \{m,a\}_M^{op} &:=& \{a,m\}_M, \\
  d_M^{op} &:=& d_M.
\end{eqnarray*}
\end{pro}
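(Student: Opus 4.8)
The plan is to mirror the proof of Proposition~\ref{pro:oppo-alg} and reduce every axiom of the right-handed structure to the corresponding left-module axiom of $(M,\ast,\{\cdot,\cdot\}_M,d_M)$ via the three defining substitutions $m\ast^{op}a=a\ast m$, $\{m,a\}_M^{op}=\{a,m\}_M$ and $d_M^{op}=d_M$. Concretely, I would argue that each clause making $M$ a \emph{left} $\lambda$-DP $A$-module becomes, after transposing its two arguments, exactly the corresponding clause making $(M,\ast^{op},\{\cdot,\cdot\}_M^{op},d_M^{op})$ a \emph{right} $\lambda$-DP $A$-module compatible with the opposite algebra $A^{op}$ of Proposition~\ref{pro:oppo-alg}. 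Since all three operations are either unchanged ($d_M$) or read backwards, this is essentially bookkeeping.

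First I would treat the differential-module layer. Because $d_M^{op}=d_M$ and the right action is the left action read backwards, the right-hand weighted Leibniz rule is literally the weighted Leibniz rule of the left $\lambda$-differential $A$-module $(M,\ast,d_M)$ after substituting $m\ast^{op}a=a\ast m$; likewise the right-module associativity $m\ast^{op}(a\cdot b)=(m\ast^{op}a)\ast^{op}b$ is the left-module associativity over $A^{op}$. Hence $(M,\ast^{op},d_M^{op})$ is a right $\lambda$-differential $A$-module.

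Next, for the Lie-module layer I would verify the compatibility of $d_M^{op}$ with $\{\cdot,\cdot\}_M^{op}$ and the module Jacobi-type identity. The compatibility is immediate from \eqref{eq:diff-Liemodu}, since the bracket is merely transposed and $d$, $d_M$ are untouched. The genuine point is the action identity: expanding $\{\{m,a\}_M^{op},b\}_M^{op}$ through the definition yields $\{b,\{a,m\}_M\}_M$, and applying the second (Jacobi-type) defining identity of a left $\lambda$-differential Lie $A$-module together with the antisymmetry of $\{\cdot,\cdot\}$ on $A$ produces exactly the required right-module identity. Finally, the two Poisson-module Leibniz relations \eqref{eq:diff-Poimodule1} and \eqref{eq:diff-Poimodule2} collapse onto their left-handed counterparts under the same substitution.

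I expect the bookkeeping in this Lie-module action identity to be the only delicate step: one must confirm that the transposition of arguments, combined with antisymmetry, reproduces precisely the opposite bracket on $A$ rather than $A$ itself, so that $M$ is consistently a right module compatible with $A^{op}$; this is the module-level avatar of the sign $\{a,b\}_{op}=-\{a,b\}$ used in Proposition~\ref{pro:oppo-alg}. The modified $\lambda$-mDP case is handled identically: one only replaces the last term of \eqref{eq:diff-Liemodu} by its modified version ($\lambda\{a,m\}_M$ in place of $\lambda\{d(a),d_M(m)\}_M$), which transposes without affecting the argument, and the proof goes through verbatim.
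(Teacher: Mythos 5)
Your proposal is correct and matches the paper exactly in spirit: the paper gives no separate argument for this proposition, stating only that Propositions~\ref{pro:oppo-module}--\ref{pro:coho-module} are straightforward applications of Propositions~\ref{pro:oppo-alg}--\ref{pro:coho-alg}, which is precisely the transposition-and-bookkeeping reduction you carry out. Your remark about the sign on the bracket (the module-level analogue of $\{a,b\}_{op}=-\{a,b\}$) correctly identifies the only point where care is needed, and is consistent with how the paper handles the algebra case.
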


\begin{pro}\label{pro:Poi-tensor-modu}
Let $(A,\cdot_A,\{\cdot,\cdot\}_A,d_A)$,~$(B,\cdot_B,\{\cdot,\cdot\}_B,d_B)$ be two $\lambda$-DP algebras. For any \\
$(M,\ast_M,\{\cdot,\cdot\}_M,d_M) \in A_{\lambda-DP}$ and $(N,\ast_N,\{\cdot,\cdot\}_N,d_N) \in B_{\lambda-DP}$, there is a left $\lambda$-DP $(A\otimes B)$-module structure on $M\otimes N$ defined by
\begin{eqnarray*}
  (a\otimes b)\star (m\otimes n) &:=& (a\ast_Mm) \otimes (b\ast_Nn), \\
  d_{M\otimes N}(m\otimes n) &:=& d_M(m)\otimes n+m\otimes d_N(n)+\lambda d_M(m)\otimes d_N(n), \\
  \{a\otimes b,m\otimes n\}_{M\otimes N} &:=& (a\ast_Mm)\otimes \{b,n\}_N+\{a,m\}_M\otimes (b\ast_Nn),
\end{eqnarray*}
for all $a \in A, b \in B, m \in M, n \in N$.
\end{pro}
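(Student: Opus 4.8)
The plan is to follow the same three-part verification used in the proof of Proposition~\ref{pro:Poi-tensor-alg}, replacing the internal multiplication and bracket on the second factor by the corresponding module actions. Concretely, by Definition~\ref{defi:diff-Poimodule} I must check that $(M\otimes N,\star,d_{M\otimes N})$ is a left $\lambda$-differential $(A\otimes B)$-module, that $(M\otimes N,\{\cdot,\cdot\}_{M\otimes N},d_{M\otimes N})$ is a left $\lambda$-differential Lie $(A\otimes B)$-module, and finally that the two Poisson-module Leibniz rules \eqref{eq:diff-Poimodule1} and \eqref{eq:diff-Poimodule2} hold.

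First I would treat the differential-module part. Since $M$ is a left $A$-module and $N$ is a left $B$-module, $M\otimes N$ is a left $(A\otimes B)$-module under $\star$ via the standard tensor product of module actions, so the only thing to verify is the weighted Leibniz identity for $d_{M\otimes N}$. Expanding $d_{M\otimes N}\big((a\otimes b)\star(m\otimes n)\big)=d_{M\otimes N}\big((a\ast_M m)\otimes(b\ast_N n)\big)$ by the definition of $d_{M\otimes N}$ and then by the weighted Leibniz rules for $d_M$ on $a\ast_M m$ and for $d_N$ on $b\ast_N n$ produces nine terms with coefficients $1,\lambda,\lambda^2,\lambda^3$; regrouping them reproduces $d(a\otimes b)\star(m\otimes n)+(a\otimes b)\star d_{M\otimes N}(m\otimes n)+\lambda\, d(a\otimes b)\star d_{M\otimes N}(m\otimes n)$. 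This is formally identical to the differential-algebra computation already carried out in Proposition~\ref{pro:Poi-tensor-alg}.

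Next I would verify the Lie-module conditions. The action identity $\{a\otimes b,\{a'\otimes b',m\otimes n\}_{M\otimes N}\}_{M\otimes N}=\{\{a\otimes b,a'\otimes b'\},m\otimes n\}_{M\otimes N}+\{a'\otimes b',\{a\otimes b,m\otimes n\}_{M\otimes N}\}_{M\otimes N}$ follows by expanding both sides and invoking the Lie-module identities for $M$ over $A$ and for $N$ over $B$ together with the mixed rules \eqref{eq:diff-Poimodule1} and \eqref{eq:diff-Poimodule2}, exactly as in the bracket-Jacobi step of Proposition~\ref{pro:Poi-tensor-alg}. The compatibility \eqref{eq:diff-Liemodu} for $d_{M\otimes N}$ with $\{\cdot,\cdot\}_{M\otimes N}$ is the longest computation: expanding $d_{M\otimes N}\big(\{a\otimes b,m\otimes n\}_{M\otimes N}\big)$ generates the same web of cross terms (many carrying $\lambda$ and $\lambda^2$ factors) that appears in the display for $d(\{a\otimes b,a'\otimes b'\})$ in the algebra proof, and they must be collected into the form $\{d(a\otimes b),\ldots\}+\{\ldots,d_{M\otimes N}(\ldots)\}+\lambda\{d(\ldots),d_{M\otimes N}(\ldots)\}$. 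This bookkeeping is the main obstacle: there is no conceptual difficulty, only the risk of dropping one of the $\lambda$-weighted terms.

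Finally, the two Poisson-module Leibniz rules \eqref{eq:diff-Poimodule1} and \eqref{eq:diff-Poimodule2} for $M\otimes N$ reduce, after expanding the definitions of $\star$ and $\{\cdot,\cdot\}_{M\otimes N}$, to the same rules holding simultaneously on $M$ over $A$ and on $N$ over $B$, together with the biderivation identity \eqref{eq:Leibniz-rule} on $A$ and on $B$; this is the direct analogue of the final Leibniz computation in Proposition~\ref{pro:Poi-tensor-alg}. Assembling the three parts yields $(M\otimes N,\star,\{\cdot,\cdot\}_{M\otimes N},d_{M\otimes N})\in (A\otimes B)_{\lambda-DP}$, as claimed.
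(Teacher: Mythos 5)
Your proposal is correct and matches the paper's approach: the paper gives no separate proof of this proposition, stating only that it is a straightforward application of Proposition~\ref{pro:Poi-tensor-alg}, and your plan carries out exactly that adaptation, replacing the second argument of the product and bracket by the module actions and invoking \eqref{eq:diff-Poimodule1}, \eqref{eq:diff-Poimodule2} and \eqref{eq:diff-Liemodu} where the algebra-level proof used \eqref{eq:Leibniz-rule}, \eqref{eq:diff-Liealg} and \eqref{eq:lambda}. The three-part verification (differential module, Lie module, Poisson-module Leibniz rules) is precisely the structure required by Definition~\ref{defi:diff-Poimodule}.
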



\begin{pro}\label{pro:mPoi-tensor-modu}
Let $(A,\cdot_A,\{\cdot,\cdot\}_A,d_A)$,~$(B,\cdot_B,\{\cdot,\cdot\}_B,d_B)$ be two $\lambda$-mDP algebras. If \\
$(M,\ast_M,\{\cdot,\cdot\}_M,d_M) \in A_{\lambda-mDP}$ and $(N,\ast_N,\{\cdot,\cdot\}_N,d_N) \in B_{\lambda-mDP}$, then $(M\otimes N,\star,\{\cdot,\cdot\}_{M\otimes N},d_{M\otimes N}) \in (A\otimes B)_{\lambda-mDP}$, where for all $a \in A, b \in B, m \in M, n \in N$,
\begin{eqnarray*}
  (a\otimes b)\star (m\otimes n) &:=& (a\ast_Mm) \otimes (b\ast_Nn), \\
  d_{M\otimes N}(m\otimes n) &:=& d_M(m)\otimes n+m\otimes d_N(n)+\lambda m\otimes n, \\
  \{a\otimes b,m\otimes n\}_{M\otimes N} &:=& (a\ast_Mm)\otimes \{b,n\}_N+\{a,m\}_M\otimes (b\ast_Nn).
\end{eqnarray*}

\end{pro}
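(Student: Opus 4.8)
The plan is to check that $(M\otimes N,\star,\{\cdot,\cdot\}_{M\otimes N},d_{M\otimes N})$ meets the three requirements of Definition \ref{defi:diff-Poimodule} for a left modified $\lambda$-differential Poisson $(A\otimes B)$-module: that $(M\otimes N,\star,d_{M\otimes N})$ is a left modified $\lambda$-differential $(A\otimes B)$-module, that $(M\otimes N,\{\cdot,\cdot\}_{M\otimes N},d_{M\otimes N})\in(A\otimes B)_{\lambda-mDL}$, and that the two Poisson-module identities \eqref{eq:diff-Poimodule1} and \eqref{eq:diff-Poimodule2} hold. Throughout, the differential carried by $A\otimes B$ is the one supplied by Proposition \ref{pro:mPoi-tensor-alg}, namely $d(a\otimes b)=d_A(a)\otimes b+a\otimes d_B(b)+\lambda\,a\otimes b$, and I would drop the subscripts on $\ast_M,\ast_N$ and on the brackets, exactly as in the proof of Proposition \ref{pro:Poi-tensor-alg}.

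First I would dispose of everything that does not see the differential. The action $\star$ is the ordinary tensor product of the left $A$-module $M$ with the left $B$-module $N$, so $M\otimes N$ is automatically a left $(A\otimes B)$-module; and the nested Lie-module axiom for $\{\cdot,\cdot\}_{M\otimes N}$ together with the two Leibniz rules \eqref{eq:diff-Poimodule1} and \eqref{eq:diff-Poimodule2} involve neither $\lambda$ nor $d$. These are therefore verified by the very same manipulations as in the non-modified Proposition \ref{pro:Poi-tensor-modu} (which themselves copy the bracket and Leibniz computations of Proposition \ref{pro:Poi-tensor-alg}), and carry over verbatim. The only genuinely new content is the pair of differential-compatibility laws
\begin{align*}
d_{M\otimes N}\big((a\otimes b)\star(m\otimes n)\big)
&= d(a\otimes b)\star(m\otimes n)+(a\otimes b)\star d_{M\otimes N}(m\otimes n)\\
&\quad+\lambda\,(a\otimes b)\star(m\otimes n),
\end{align*}
\begin{align*}
d_{M\otimes N}\big(\{a\otimes b,m\otimes n\}_{M\otimes N}\big)
&= \{d(a\otimes b),m\otimes n\}_{M\otimes N}+\{a\otimes b,d_{M\otimes N}(m\otimes n)\}_{M\otimes N}\\
&\quad+\lambda\,\{a\otimes b,m\otimes n\}_{M\otimes N}.
\end{align*}

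For each law I would expand the left-hand side by means of the definition $d_{M\otimes N}(p\otimes q)=d_M(p)\otimes q+p\otimes d_N(q)+\lambda\,p\otimes q$ together with the modified laws of the constituents: the modified associative module rule $d_M(a\ast m)=d_A(a)\ast m+a\ast d_M(m)+\lambda\,a\ast m$ (and its $N$-analogue) and the modified Lie-module identity displayed after \eqref{eq:diff-Liemodu}. The right-hand side is expanded with the $A\otimes B$-differential of Proposition \ref{pro:mPoi-tensor-alg}. Because every one of these constituent rules is linear in the differential, no quadratic or cubic powers of $\lambda$ ever appear — in contrast with Propositions \ref{pro:Poi-tensor-alg} and \ref{pro:Poi-tensor-modu}, where the cross terms reassembled into a $\lambda\{d(\cdot),d(\cdot)\}$ summand. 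Here each basic tensor ($(a\ast m)\otimes\{b,n\}$ and $\{a,m\}\otimes(b\ast n)$ in the bracket case) is instead produced with coefficient $\lambda$ from exactly three independent places on each side, so that matching the two sides amounts to observing that each basic tensor receives the coefficient $3\lambda$ on both sides while all $\lambda$-free terms pair off identically.

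The step demanding the most care is this accounting in the Lie-module differential law. On the left the three $\lambda$-contributions arise from the $\lambda$-term of the constituent rule applied to the first tensor factor, from that applied to the second factor, and from the outer modification built into $d_{M\otimes N}$; on the right they arise from the $\lambda\,a\otimes b$ summand of $d(a\otimes b)$, from the $\lambda\,m\otimes n$ summand of $d_{M\otimes N}(m\otimes n)$, and from the explicit undifferentiated correction $\lambda\{a\otimes b,m\otimes n\}_{M\otimes N}$. Once this bookkeeping is in place both laws hold and the three conditions of Definition \ref{defi:diff-Poimodule} are satisfied. In fact, since every identity to be checked is the formal $\lambda$-modified analogue of one already proved, the statement follows from Proposition \ref{pro:Poi-tensor-modu} in precisely the way that Proposition \ref{pro:mPoi-tensor-alg} followed from Proposition \ref{pro:Poi-tensor-alg}.
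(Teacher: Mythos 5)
Your proof is correct and follows the same route the paper intends: the paper dismisses this proposition with the blanket remark that Propositions \ref{pro:oppo-module}--\ref{pro:coho-module} are ``straightforward applications'' of the corresponding algebra results, and your argument is exactly that check carried out explicitly -- the $\lambda$-free and bracket identities transferring verbatim from Propositions \ref{pro:Poi-tensor-alg} and \ref{pro:Poi-tensor-modu}, and the two modified differential laws verified by the (correct) observation that each basic tensor acquires the coefficient $3\lambda$ on both sides. No gaps.
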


\begin{pro}\label{pro:Poi-quotient-modu}
Let $(A,\cdot,\{\cdot,\cdot\},d)$ be a $\lambda$-DP (or $\lambda$-mDP) algebra, $(M,\ast,\{\cdot,\cdot\}_M,d_M) \in A_{\lambda-DP} \\ (or~A_{\lambda-mDP})$, and $N\leq_{\mathcal{P}} M$. For all $a \in A, m \in M$, we define a left $\lambda$-DP (or $\lambda$-mDP) $A$-module structure on the quotient module $M/N$ as follows:
\begin{eqnarray*}
  a\ast_{M/N}(m+N) &:=& a\ast m+N, \\
  \{a,m+N\}_{M/N} &:=& \{a,m\}_M+N, \\
  d_{M/N}(m+N) &:=& d_M(m)+N.
\end{eqnarray*}

\end{pro}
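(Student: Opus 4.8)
The plan is to show that the three prescribed operations on $M/N$ are well defined and then to verify, axiom by axiom, that they make $M/N$ a left $\lambda$-DP $A$-module in the sense of Definition~\ref{defi:diff-Poimodule}, in each case by lifting to coset representatives and invoking the corresponding identity already available in $M$. Write $\pi\colon M\to M/N$, $m\mapsto m+N$, for the canonical projection. Well-definedness of $\ast_{M/N}$, $\{\cdot,\cdot\}_{M/N}$ and $d_{M/N}$ is exactly what the hypothesis $N\leq_{\mathcal{P}} M$ supplies: since $N$ is a left $A$-submodule we have $a\ast N\subseteq N$, so $a\ast m+N$ is independent of the representative $m$; likewise $\{A,N\}_M\subseteq N$ makes $\{a,m\}_M+N$ representative-independent, and $d_M(N)\subseteq N$ makes $d_M(m)+N$ representative-independent. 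Hence all three operations descend to $M/N$.

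First I would check that $(M/N,\ast_{M/N},d_{M/N})$ is a left $\lambda$-differential $A$-module: that $(M/N,\ast_{M/N})$ is a left $A$-module is the usual quotient-module fact, and the weighted Leibniz identity for $d_{M/N}$ is the image under $\pi$ of the corresponding identity for $d_M$, exactly as in the computation in Proposition~\ref{pro:Poi-quotient-alg}. Next I would verify $(M/N,\{\cdot,\cdot\}_{M/N},d_{M/N})\in A_{\lambda-DL}$: the module Jacobi-type identity is obtained by projecting $\{a,\{b,m\}_M\}_M=\{\{a,b\},m\}_M+\{b,\{a,m\}_M\}_M$, while compatibility of $d_{M/N}$ with the bracket comes from projecting \eqref{eq:diff-Liemodu}, giving
\begin{eqnarray*}
  d_{M/N}\big(\{a,m+N\}_{M/N}\big) &=& d_M\big(\{a,m\}_M\big)+N\\
  &\overset{\eqref{eq:diff-Liemodu}}{=}& \{d(a),m+N\}_{M/N}+\{a,d_{M/N}(m+N)\}_{M/N}\\
  &+& \lambda\{d(a),d_{M/N}(m+N)\}_{M/N}.
\end{eqnarray*}
Finally the two Poisson-module Leibniz rules for $M/N$ are again the $\pi$-images of \eqref{eq:diff-Poimodule1} and \eqref{eq:diff-Poimodule2} for $M$, once $a\cdot b$, $a\ast m$ and the brackets are rewritten in terms of their cosets.

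Since every required identity in $M/N$ is literally the $\pi$-image of the same identity holding in $M$, no genuine obstacle arises beyond the bookkeeping; the single step needing care is the well-definedness, and that is precisely guaranteed by the three closure conditions in the definition of $N\leq_{\mathcal{P}} M$. The modified case is handled the same way, replacing the weight terms $\lambda\, d(a)\ast d_M(m)$ and $\lambda\{d(a),d_M(m)\}_M$ by the modified weight terms $\lambda\, a\ast m$ and $\lambda\{a,m\}_M$ and using the modified module axioms, exactly paralleling the passage from Proposition~\ref{pro:Poi-tensor-alg} to Proposition~\ref{pro:mPoi-tensor-alg}. Therefore $(M/N,\ast_{M/N},\{\cdot,\cdot\}_{M/N},d_{M/N})$ is a left $\lambda$-DP (resp. $\lambda$-mDP) $A$-module.
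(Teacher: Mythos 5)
Your proposal is correct and follows exactly the route the paper intends: the paper dismisses this statement as a "straightforward application" of Proposition \ref{pro:Poi-quotient-alg}, and your argument — well-definedness from the three closure conditions in $N\leq_{\mathcal{P}}M$, followed by projecting each module identity from $M$ to $M/N$ — is precisely that application, carried out with the details filled in.
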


\begin{pro}\label{pro:i'-iii'}
Let $(A,\cdot,\{\cdot,\cdot\},d)$ be a $\lambda$-DP (or $\lambda$-mDP) algebra. If $(M,\ast_M,\{\cdot,\cdot\}_M,d_M), \\
(N,\ast_N,\{\cdot,\cdot\}_N,d_N) \in A_{\lambda-DP}~(or~A_{\lambda-mDP})$, and $f:M\rightarrow N$ is a left $\lambda$-DP (or $\lambda$-mDP) $A$-module homomorphism. Then
\begin{description}
  \item[(i)] $\mathrm{ker}f\leq_{\mathcal{P}} M,$
  \item[(ii)]  if $X\leq_{\mathcal{P}} M$, then $f(X)\leq_{\mathcal{P}} N$. Particularly, $\mathrm{im}f\leq_{\mathcal{P}} N,$
  \item[(iii)] $M/\mathrm{ker}f\cong \mathrm{im}f$ as left $\lambda$-DP (or $\lambda$-mDP) $A$-modules $.$
\end{description}
\end{pro}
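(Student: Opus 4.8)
The plan is to transcribe the algebra-level first isomorphism theorem of Proposition \ref{pro:i-iii} to the module setting, using only the three defining compatibilities of a left $\lambda$-DP (or $\lambda$-mDP) $A$-module homomorphism $f\colon M\rightarrow N$: that $f$ is a left $A$-module map, that $f(\{a,m\}_M)=\{a,f(m)\}_N$, and that $f\circ d_M=d_N\circ f$. Since the defining identity for the modified case differs from the unmodified one only in the differential itself, while the homomorphism conditions are identical in both settings and the computations below never invoke the module differential identity (only the equivariance $f\circ d_M=d_N\circ f$), every verification applies verbatim to the $\lambda$-mDP case. I would therefore handle both cases at once.

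For part (i), I would first recall that $\mathrm{ker}f$ is an $A$-submodule of $M$ by ordinary module theory, so it remains only to check the two extra closure conditions in the definition of $\leq_{\mathcal{P}}$. For $m\in\mathrm{ker}f$ one has $f(d_M(m))=d_N(f(m))=0$, giving $d_M(\mathrm{ker}f)\subseteq\mathrm{ker}f$; and for $a\in A$, $m\in\mathrm{ker}f$ one has $f(\{a,m\}_M)=\{a,f(m)\}_N=\{a,0\}_N=0$, giving $\{A,\mathrm{ker}f\}_M\subseteq\mathrm{ker}f$. For part (ii), given $X\leq_{\mathcal{P}}M$, the set $f(X)$ is an $A$-submodule because $a\ast_N f(x)=f(a\ast_M x)\in f(X)$; it is closed under $d_N$ because $d_N(f(x))=f(d_M(x))\in f(X)$, using $d_M(X)\subseteq X$; and it is closed under $\{A,-\}_N$ because $\{a,f(x)\}_N=f(\{a,x\}_M)\in f(X)$, using $\{A,X\}_M\subseteq X$. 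Taking $X=M$ then yields $\mathrm{im}f\leq_{\mathcal{P}}N$.

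For part (iii), I would invoke Proposition \ref{pro:Poi-quotient-modu} together with part (i) to equip $M/\mathrm{ker}f$ with its left $\lambda$-DP (or $\lambda$-mDP) $A$-module structure, and then define $\phi\colon M/\mathrm{ker}f\rightarrow\mathrm{im}f$ by $\phi(m+\mathrm{ker}f)=f(m)$. The classical first isomorphism theorem for modules shows $\phi$ is a well-defined bijective $A$-module homomorphism, so the only remaining work is compatibility with the bracket and the differential: $\phi(d_{M/\mathrm{ker}f}(m+\mathrm{ker}f))=f(d_M(m))=d_N(f(m))=d_N(\phi(m+\mathrm{ker}f))$, and $\phi(\{a,m+\mathrm{ker}f\}_{M/\mathrm{ker}f})=f(\{a,m\}_M)=\{a,f(m)\}_N=\{a,\phi(m+\mathrm{ker}f)\}_N$.

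The argument is entirely formal, so there is no genuine obstacle; the only point demanding mild care is to confirm that the quotient module furnished by Proposition \ref{pro:Poi-quotient-modu} is the correct ambient structure and that $\phi$ simultaneously respects all three operations—action, bracket, and differential—rather than merely the underlying module structure. As with Proposition \ref{pro:i-iii}, this is a direct transcription of the algebra-level proof, the substantive content having already been absorbed into Proposition \ref{pro:Poi-quotient-modu}.
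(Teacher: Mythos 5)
Your proposal is correct and follows exactly the route the paper intends: the paper gives no separate argument for this proposition, stating only that Propositions \ref{pro:oppo-module}--\ref{pro:coho-module} are straightforward applications of the algebra-level results, and your write-up is precisely the transcription of the proof of Proposition \ref{pro:i-iii} (via Proposition \ref{pro:Poi-quotient-modu} for part (iii)) that this remark presupposes. The verifications of closure under $d$ and the bracket, and of the compatibility of $\phi$ with all three operations, are the same computations as in the algebra case and are carried out correctly, including the observation that the modified case requires no change.
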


\begin{pro}\label{pro:coho-module}
Let $(A,\cdot,\{\cdot,\cdot\},d)$ be a $\lambda$-DP (or $\lambda$-mDP) algebra and $(M,\ast_M,\{\cdot,\cdot\}_M,d_M) \in A_{\lambda-DP} 
~(or~A_{\lambda-mDP})$. For all $a\in A, m\in M$, we define a $\lambda$-DP (or $\lambda$-mDP) $\mathcal{H}(A)$-module structure on cohomology algebra $(\mathcal{H}(M),\star,\{\cdot,\cdot\}_{\mathcal{H}(M)},d_{\mathcal{H}(M)})$ as follows:
\begin{eqnarray*}
  (a+\mathrm{im}d)\star(m+\mathrm{im}d_M) &:=& a\ast_M m+\mathrm{im}d_M, \\
  \{a+\mathrm{im}d,m+\mathrm{im}d_M\}_{\mathcal{H}(M)}&:=&\{a,m\}_M+\mathrm{im}d_M,\\
  d_{\mathcal{H}(M)}(m+\mathrm{im}d_M)&:=& d_M(m)+\mathrm{im}d_M.
\end{eqnarray*}
Moreover, if $(N,\ast_N,\{\cdot,\cdot\}_N,d_N) \in A_{\lambda-DP}~(or~A_{\lambda-mDP})$ and $M\cong N$, then $\mathcal{H}(M)\cong\mathcal{H}(N)$ as \\ $\mathcal{H}(A)_{\lambda-DP}~(or~\mathcal{H}(A)_{\lambda-mDP})$.

\end{pro}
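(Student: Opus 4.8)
The plan is to follow the proof of Proposition~\ref{pro:coho-alg} essentially verbatim, but at the level of the module, treating $\mathcal{H}(M)$ (with classes written $m + \mathrm{im}\,d_M$) as the module analogue of the cohomology algebra $\mathcal{H}(A)$ and using Proposition~\ref{pro:Poi-quotient-modu} in place of Proposition~\ref{pro:Poi-quotient-alg}. Concretely, I would fix representatives $a \in A$ and $m \in M$ for classes $a + \mathrm{im}\,d \in \mathcal{H}(A)$ and $m + \mathrm{im}\,d_M \in \mathcal{H}(M)$, and verify each defining identity of a left $\lambda$-DP $\mathcal{H}(A)$-module by lifting it to the corresponding identity that already holds in $M \in A_{\lambda-DP}$, then projecting back modulo the relevant image.

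First I would settle representative-independence, since this is the only point that is not purely mechanical: the action of $\mathcal{H}(A)$ on $\mathcal{H}(M)$ mixes $\mathrm{im}\,d \subseteq A$ with $\mathrm{im}\,d_M \subseteq M$, so I must check that replacing $a$ by $a + d(u)$ or $m$ by $m + d_M(v)$ alters $a \ast_M m$ only by an element of $\mathrm{im}\,d_M$, and likewise for $\{a,m\}_M$ and for $d_M(m)$. The plan is to derive these stability properties from the differential-module Leibniz rule \eqref{eq:diff-Liemodu} together with \eqref{eq:diff-Poimodule1} and \eqref{eq:diff-Poimodule2}, in the same way that the stability of $\mathrm{im}\,d$ underlies Proposition~\ref{pro:coho-alg}. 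This is the step I expect to be the main obstacle.

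Once well-definedness is in hand, I would verify the module axioms in turn. First, that $d_{\mathcal{H}(M)}$ satisfies the weighted Leibniz rule for $\star$: lift to $d_M(a \ast_M m)$, apply the left $\lambda$-differential module rule, and reduce modulo $\mathrm{im}\,d_M$. Second, that $(\mathcal{H}(M), \{\cdot,\cdot\}_{\mathcal{H}(M)}, d_{\mathcal{H}(M)}) \in \mathcal{H}(A)_{\lambda-DL}$, which descends from \eqref{eq:diff-Liemodu} and the Jacobi-type identity of the differential Lie module, term by term as in the Lie computation of Proposition~\ref{pro:coho-alg}. Third, the two Poisson compatibilities \eqref{eq:diff-Poimodule1} and \eqref{eq:diff-Poimodule2}, each obtained by applying the same identity to representatives and passing to cosets. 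The modified case is identical after replacing the weighted terms by the modified ones, exactly as in the modified part of Proposition~\ref{pro:coho-alg}.

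Finally, for the isomorphism statement, given a left $\lambda$-DP $A$-module isomorphism $f \colon M \to N$ I would set $\widetilde{f}(m + \mathrm{im}\,d_M) := f(m) + \mathrm{im}\,d_N$. Since $f \circ d_M = d_N \circ f$, we get $f(\mathrm{im}\,d_M) = \mathrm{im}\,d_N$, so $\widetilde{f}$ is well-defined and bijective; the relations $f(a \ast_M m) = a \ast_N f(m)$, $f(\{a,m\}_M) = \{a, f(m)\}_N$ and $f \circ d_M = d_N \circ f$ then show that $\widetilde{f}$ intertwines $\star$, $\{\cdot,\cdot\}$ and the differentials over the common algebra $\mathcal{H}(A)$, yielding $\mathcal{H}(M) \cong \mathcal{H}(N)$. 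Thus, apart from the representative-independence flagged above, the remaining verifications are routine transcriptions of the algebra case.
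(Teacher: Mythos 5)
Your proposal is correct and matches the paper's approach: the paper gives no separate argument for this proposition, stating only that it is a straightforward application of Proposition~\ref{pro:coho-alg}, which is exactly the transcription you carry out. The well-definedness issue you flag does resolve as you expect, since representatives of classes in $\mathcal{H}(A)$ and $\mathcal{H}(M)$ lie in $\ker d$ and $\ker d_M$ respectively, so the weighted Leibniz rule \eqref{eq:diff-Liemodu} (and its associative analogue) collapses to show that $d(u)\ast_M m$, $a\ast_M d_M(v)$ and the corresponding bracket terms all land in $\mathrm{im}\,d_M$.
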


\section{Universal enveloping algebras of $\lambda$-DP (or $\lambda$-mDP) algebras}\label{sec:Uea-DPa}
In this section, we generalize the definitions of universal enveloping algebras to $\lambda$-DP (or $\lambda$-mDP) algebras by generators and defining relations, see \cite{Umirbaev}. 

\begin{defi}\label{def:Uni-enve-alg}
Let $(A,\cdot,\{\cdot,\cdot\},d)$ be a $\lambda$-DP (or $\lambda$-mDP) algebra, and
\begin{equation*}
  M_A=\{M_a:a \in A\},\quad H_A=\{H_a:a \in A\}
\end{equation*}
are two copies of the linear space $A$ endowed with two $\mathbf{k}$-linear isomorphisms $M:A\rightarrow M_A$, $H:A\rightarrow H_A$ sending $a$ to $M_a, H_a$ respectively. The {\bf universal enveloping algebra} $A^{ue}$ of $A$ is defined to be the quotient algebra of the free algebra generated by $M_A$ and $H_A$, subject to the following relations:
\begin{eqnarray}
  \label{eq:rela-M-pro}M_{ab} &=& M_{a}M_{b}, \\
  \label{eq:rela-H-bra}H_{\{a,b\}} &=& H_{a}H_{b}-H_{b}H_{a}, \\
  \label{eq:rela-M-bra}M_{\{a,b\}}&=& H_{a}M_{b}-M_{b}H_{a}, \\
  \label{eq:rela-H-pro}H_{ab} &=& M_{a}H_{b}+M_{b}H_{a}, \\
  \label{eq:rela-M-unit}M_1&=& 1,
\end{eqnarray}
for all $a,b \in A$.

\end{defi}

\begin{pro}\label{pro:uni-enve-alg}
Let $(A,\cdot,\{\cdot,\cdot\},d)$ be a $\lambda$-DP (or $\lambda$-mDP) algebra, then the universal enveloping algebra $A^{ue}$ has a (modified) $\lambda$-differential algebra structure $D:A^{ue}\rightarrow A^{ue}$ such that for all $a \in A$,
\begin{equation*}
  D(M_a)=M_{d(a)},\quad D(H_a)=H_{d(a)}.
\end{equation*}
\end{pro}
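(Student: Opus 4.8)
The plan is to construct $D$ first on the free algebra $F$ on $M_A\cup H_A$, where no consistency conditions arise, and then to show that $D$ descends to the quotient $A^{ue}=F/I$, with $I$ the two-sided ideal generated by \eqref{eq:rela-M-pro}--\eqref{eq:rela-M-unit}. The point to keep in mind is that a weighted (resp.\ modified) Leibniz operator is not freely extendable by a naive rule, so I would not extend $D$ directly; instead I would linearize it. In the $\lambda$-DP case set $\Phi:=\mathrm{id}+\lambda D$; a one-line expansion shows that $D$ satisfies \eqref{eq:lambda} precisely when $\Phi$ is a unital algebra homomorphism. In the $\lambda$-mDP case set $\Delta:=D+\lambda\,\mathrm{id}$; likewise $D$ satisfies \eqref{eq:modi-lambda} precisely when $\Delta$ is an ordinary derivation. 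Since homomorphisms and derivations out of a free algebra are uniquely determined by, and exist for, arbitrary values on generators, this produces $D$ on $F$ for free (for $\lambda=0$ the $\lambda$-DP operator $D$ is already an ordinary derivation and is built as such).

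The structural input at the level of $A$ is that $\phi_A:=\mathrm{id}+\lambda d$ is simultaneously an endomorphism of $(A,\cdot)$ and of $(A,\{\cdot,\cdot\})$ fixing $1$, which is just the rewriting $\phi_A(a\cdot b)=\phi_A(a)\cdot\phi_A(b)$ and $\phi_A(\{a,b\})=\{\phi_A(a),\phi_A(b)\}$ of \eqref{eq:lambda} and \eqref{eq:diff-Liealg} (using $d(1)=0$); dually, in the modified setting $\partial_A:=d+\lambda\,\mathrm{id}$ is simultaneously a derivation of both operations with $\partial_A(1)=0$, a rewriting of \eqref{eq:modi-lambda} and its Lie analogue (using $d(1)=-\lambda$). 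Accordingly I would define $\Phi$ on $F$ as the algebra endomorphism with $\Phi(M_a)=M_{\phi_A(a)}$, $\Phi(H_a)=H_{\phi_A(a)}$, and $\Delta$ as the derivation with $\Delta(M_a)=M_{\partial_A(a)}$, $\Delta(H_a)=H_{\partial_A(a)}$; unwinding the definitions recovers $D(M_a)=M_{d(a)}$ and $D(H_a)=H_{d(a)}$ in both cases.

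The heart of the argument is the descent, and here I would exploit that the defining relations \eqref{eq:rela-M-pro}--\eqref{eq:rela-M-unit} involve only $\cdot$, $\{\cdot,\cdot\}$ and $1$, never $d$. Hence any linear self-map of $A$ preserving this Poisson data induces a corresponding self-map of $A^{ue}$: a Poisson endomorphism induces an algebra endomorphism, and a Poisson derivation induces an algebra derivation. Concretely, $\Phi(I)\subseteq I$ is checked relation by relation; for example $\Phi(M_{ab})=M_{\phi_A(a)\phi_A(b)}$ and $\Phi(M_aM_b)=M_{\phi_A(a)}M_{\phi_A(b)}$ coincide modulo $I$ by \eqref{eq:rela-M-pro} applied to the pair $\phi_A(a),\phi_A(b)$, and the remaining relations \eqref{eq:rela-H-bra}, \eqref{eq:rela-M-bra}, \eqref{eq:rela-H-pro} reduce in the same way to themselves evaluated at $\phi_A(a),\phi_A(b)$, using that $\phi_A$ respects both products. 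The modified case is identical with the Leibniz expansion of $\Delta$ in place of the multiplicativity of $\Phi$. Once $\Phi(I)\subseteq I$ (resp.\ $\Delta(I)\subseteq I$) is established, $D=\tfrac{1}{\lambda}(\Phi-\mathrm{id})$ for $\lambda\neq0$ (resp.\ $D=\Delta-\lambda\,\mathrm{id}$) is well defined on $A^{ue}$, takes the asserted values on generators, and satisfies \eqref{eq:lambda} (resp.\ \eqref{eq:modi-lambda}) because $\Phi=\mathrm{id}+\lambda D$ is a homomorphism (resp.\ $\Delta=D+\lambda\,\mathrm{id}$ is a derivation).

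I expect the one place that truly requires care to be the unit relation \eqref{eq:rela-M-unit}, where the two cases genuinely diverge. In the $\lambda$-DP case $d(1)=0$ gives $\phi_A(1)=1$, so $\Phi(M_1-1)=M_1-1\in I$ and $D(1)=0$, as a $\lambda$-differential algebra demands. In the $\lambda$-mDP case the modified rule \eqref{eq:modi-lambda} forces $d(1)=-\lambda$, whence $\partial_A(1)=0$ and $\Delta(M_1)=M_0=0=\Delta(1)$, so $\Delta(M_1-1)\in I$; translating back yields $D(1)=-\lambda$, the correct value of a modified $\lambda$-differential operator on the unit. Tracking these two normalizations, and verifying that each is compatible with $M_1=1$, is the only delicate bookkeeping; the rest of the proof is the uniform, mechanical relation-by-relation check described above.
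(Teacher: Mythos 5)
Your proposal is correct, and it takes a genuinely different route from the paper's. The paper defines $D$ on the generators by $D(M_a)=M_{d(a)}$, $D(H_a)=H_{d(a)}$, tacitly extends it to products by the weighted Leibniz rule, and then verifies $D(M_{ab})=D(M_aM_b)$, $D(H_{\{a,b\}})=D(H_aH_b-H_bH_a)$, etc., by a direct term-by-term expansion of both sides using \eqref{eq:lambda} and \eqref{eq:diff-Liealg}; the whole content is in those four computations. You instead linearize: $\Phi=\mathrm{id}+\lambda D$ is a unital endomorphism iff $D$ is a $\lambda$-differential operator (and $\Delta=D+\lambda\,\mathrm{id}$ is a derivation iff $D$ is modified), and correspondingly $\phi_A=\mathrm{id}+\lambda d$ (resp.\ $\partial_A=d+\lambda\,\mathrm{id}$) is an endomorphism (resp.\ derivation) of the Poisson structure on $A$. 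This buys you two things the paper leaves implicit: first, existence and uniqueness of the extension of $D$ to the free algebra $F$ becomes the standard universal property of homomorphisms/derivations out of a free algebra, rather than an unexamined ``extend by the weighted Leibniz rule''; second, the descent $\Phi(I)\subseteq I$ (resp.\ $\Delta(I)\subseteq I$) reduces to the purely formal observation that each relation \eqref{eq:rela-M-pro}--\eqref{eq:rela-M-unit} is carried to the same relation evaluated at $\phi_A(a),\phi_A(b)$, since the relations never mention $d$. Your treatment of the unit is also right and slightly more careful than the paper's one-line remark: $d(1)=0$ gives $\Phi(M_1-1)=M_1-1$ in the weighted case, while \eqref{eq:modi-lambda} forces $d(1)=-\lambda$, hence $\partial_A(1)=0$ and $D(1)=-\lambda$ in the modified case. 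The cost of your approach is the separate handling of $\lambda=0$ (where $\Phi=\mathrm{id}$ carries no information), which you correctly flag; the paper's brute-force computation is uniform in $\lambda$.
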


\begin{proof}
It suffices to prove that $D$ preserves all the relations \eqref{eq:rela-M-pro}$\sim$\eqref{eq:rela-M-unit}. It is obvious that $D(M_1)=D(1)$.
Let $a,b\in A$, then by \eqref{eq:lambda} \eqref{eq:rela-M-pro} \eqref{eq:rela-H-pro} we have
\begin{eqnarray*}
  D(M_{ab}) &=& M_{d(ab)}\overset{\eqref{eq:lambda}}{=}M_{d(a)b}+M_{ad(b)}+M_{\lambda d(a)d(b)} \\
  &\overset{\eqref{eq:rela-M-pro}}{=}& M_{d(a)}M_b+M_aM_{d(b)}+\lambda M_{d(a)}M_{d(b)} \\
  &=& D(M_a)M_b+M_aD(M_b)+\lambda D(M_a)D(M_b) \\
  &=& D(M_aM_b),
\end{eqnarray*}
and
\begin{eqnarray*}
  D(H_{ab}) &=& H_{d(ab)}\overset{\eqref{eq:lambda}}{=}H_{d(a)b}+H_{ad(b)}+H_{\lambda d(a)d(b)} \\
  &\overset{\eqref{eq:rela-H-pro}}{=}& \big(M_{d(a)}H_b+M_bH_{d(a)}\big)+\big(M_aH_{d(b)}+M_{d(b)}H_a\big)
  +\lambda\big(M_{d(a)}H_{d(b)}+M_{d(b)}H_{d(a)}\big)  \\
  &=& \big(D(M_a)H_b+M_aD(H_b)+\lambda D(M_a)D(H_b)\big)\\
  &+&\big(D(M_b)H_a+M_bD(H_a)+\lambda D(M_b)D(H_a)\big)\\
  &=& D(M_{a}H_{b}+M_{b}H_{a}).
\end{eqnarray*}
Furthermore, by \eqref{eq:diff-Liealg} \eqref{eq:rela-H-bra} \eqref{eq:rela-M-bra} we obtain
\begin{eqnarray*}
  D(H_{\{a,b\}}) &=& H_{d(\{a,b\})}\overset{\eqref{eq:diff-Liealg}}{=}H_{\{d(a),b\}}+H_{\{a,d(b)\}}+H_{\lambda\{d(a),d(b)\}} \\
  &\overset{\eqref{eq:rela-H-bra}}{=}& \big(H_{d(a)}H_b-H_bH_{d(a)}\big)+\big(H_aH_{d(b)}-H_{d(b)}H_a\big)+\lambda\big(H_{d(a)}H_{d(b)}-H_{d(b)}H_{d(a)}\big) \\
  &=& \big(D(H_a)H_b+H_aD(H_b)+\lambda H_{d(a)}H_{d(b)}\big)-\big(D(H_b)H_a+H_bD(H_a)+\lambda H_{d(b)}H_{d(a)}\big) \\
  &=& D(H_{a}H_{b}-H_{b}H_{a}),
\end{eqnarray*}
and
\begin{eqnarray*}
  D(M_{\{a,b\}}) &=& M_{d(\{a,b\})}=\overset{\eqref{eq:diff-Liealg}}{=}M_{\{d(a),b\}}+M_{\{a,d(b)\}}+M_{\lambda\{d(a),d(b)\}} \\
  &\overset{\eqref{eq:rela-M-bra}}{=}& \big(H_{d(a)}M_b-M_bH_{d(a)}\big)+\big(H_aM_{d(b)}-M_{d(b)}H_a\big)
  +\lambda\big(H_{d(a)}M_{d(b)}-M_{d(b)}H_{d(a)}\big) \\
  &=& \big(D(H_a)M_b+H_aD(M_b)+\lambda D(H_a)D(M_b)\big)\\
  &-&\big(D(M_b)H_a+M_bD(H_a)+\lambda D(M_b)D(H_a)\big) \\
  &=& D(H_{a}M_{b}-M_{b}H_{a}).
\end{eqnarray*}
The modified differential case can be proved similarly. Hence, the conclusion holds.

\end{proof}

\begin{defi}\label{defi:P-triple}
Let $(A,\cdot_A,\{\cdot,\cdot\}_A,d_A)$ be a $\lambda$-DP (or $\lambda$-mDP) algebra. For all $a,b \in A$, if
\begin{itemize}
  \item $(\mathbf{P1})~(B,\cdot_B,d_B)$ is a commutative (modified) $\lambda$-differential algebra and $f: (A,\cdot_A,d_A)\rightarrow (B,\cdot_B,d_B)$ is a (modified) $\lambda$-differential algebra homomorphism;
  \item $(\mathbf{P2})~g:(A,\{\cdot,\cdot\}_A,d_A)\rightarrow (B,\{\cdot,\cdot\}_B,d_B)$ is a $\lambda$-DL (or $\lambda$-mDL) algebra homomorphism, where $\{\cdot,\cdot\}_B$ is defined as the commutator: $\{x,y\}_B=x\cdot_B y-y\cdot_B x$, for $x,y \in B$;
  \item $(\mathbf{P3})~f(\{a,b\}_A)=g(a)\cdot_Bf(b)-f(b)\cdot_Bg(a)$;
  \item $(\mathbf{P4})~g(a\cdot_Ab)=f(a)\cdot_Bg(b)+f(b)\cdot_Bg(a)$,
\end{itemize}
then the triple $(B,f,g)$ is called a {\bf $\mathcal{P}$-triple} of $A$.

\end{defi}


\begin{rmk}\label{rmk:ex-triple}
The above properties, $(\mathbf{P1})\sim (\mathbf{P4})$ are equivalent to \eqref{eq:rela-M-pro} $\sim$ \eqref{eq:rela-H-pro}, respectively. Hence, the triple $(A^{ue},M,H)$ is a $\mathcal{P}$-triple of $A$.
\end{rmk}

\begin{thm}\label{thm:bi-commute}
Let $(B,f,g)$ be a $\mathcal{P}$-triple of $A$. Then there exists a unique (modified) $\lambda$-differential algebra homomorphism $\phi:A^{ue}\rightarrow B$ satisfying $f=\phi M$ and $g=\phi H$, i.e., the following diagram bi-commutes.
\[
 \xymatrix{
 A \ar[d]_{H} \ar[d]^{M} \ar[r]_{g} \ar[r]^{f} & B     \\
 A^{ue} \ar@{.>}[ur]|-{\exists!\phi}                               }
\]

\end{thm}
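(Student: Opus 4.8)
The plan is to exploit the universal property of the free algebra. Write $F$ for the free algebra on $M_A\oplus H_A$, so that $A^{ue}=F/I$, where $I$ is the two-sided ideal generated by the relations \eqref{eq:rela-M-pro}$\sim$\eqref{eq:rela-M-unit}. Since $f$ and $g$ are linear and $M,H$ are linear isomorphisms, the assignment $M_a\mapsto f(a)$, $H_a\mapsto g(a)$ defines a linear map $M_A\oplus H_A\rightarrow B$, which extends uniquely to an algebra homomorphism $\widetilde{\phi}:F\rightarrow B$. First I would show that $\widetilde{\phi}$ annihilates $I$, so that it descends to $\phi:A^{ue}\rightarrow B$; then that $\phi$ intertwines $D$ and $d_B$; and finally that $\phi$ is unique.

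To check $\widetilde{\phi}(I)=0$ it suffices to verify that $\widetilde{\phi}$ equalizes the two sides of each generating relation. Applying $\widetilde{\phi}$ turns \eqref{eq:rela-M-pro}$\sim$\eqref{eq:rela-M-unit} into the identities $f(ab)=f(a)f(b)$, $g(\{a,b\})=g(a)g(b)-g(b)g(a)$, $f(\{a,b\})=g(a)f(b)-f(b)g(a)$, $g(ab)=f(a)g(b)+f(b)g(a)$, and $f(1)=1_B$, which are precisely the multiplicativity and unitality of $f$ from $(\mathbf{P1})$, the Lie-homomorphism property of $g$ from $(\mathbf{P2})$ (note $\{g(a),g(b)\}_B=g(a)g(b)-g(b)g(a)$), and the conditions $(\mathbf{P3})$ and $(\mathbf{P4})$. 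This correspondence is exactly the content of Remark \ref{rmk:ex-triple}. Hence $\widetilde{\phi}$ factors through a well-defined algebra homomorphism $\phi:A^{ue}\rightarrow B$ with $\phi(M_a)=f(a)$ and $\phi(H_a)=g(a)$, that is $f=\phi M$ and $g=\phi H$.

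Next I would prove $\phi\circ D=d_B\circ\phi$. On generators this reads $\phi(D(M_a))=\phi(M_{d(a)})=f(d(a))$ against $d_B(\phi(M_a))=d_B(f(a))$, and $\phi(D(H_a))=g(d(a))$ against $d_B(g(a))$; these agree because $f$ and $g$ commute with the differentials by $(\mathbf{P1})$ and $(\mathbf{P2})$. To propagate the equality from generators to all of $A^{ue}$ I would induct on word length: if $\phi D=d_B\phi$ holds on $x,y$, then using that $D$ and $d_B$ obey the same weighted Leibniz rule and that $\phi$ is multiplicative,
\[
\phi\big(D(xy)\big)=\phi\big(D(x)\big)\phi(y)+\phi(x)\phi\big(D(y)\big)+\lambda\,\phi\big(D(x)\big)\phi\big(D(y)\big)=d_B\big(\phi(x)\phi(y)\big)=d_B\big(\phi(xy)\big),
\]
with the term $\lambda\,\phi(D(x))\phi(D(y))$ replaced by $\lambda\,\phi(x)\phi(y)$ in the modified case (where both $D$ and $d_B$ satisfy the modified Leibniz rule). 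As the generators $M_a,H_a$ furnish the base case, $\phi$ is a (modified) $\lambda$-differential algebra homomorphism.

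Finally, uniqueness is immediate: any (modified) $\lambda$-differential algebra homomorphism $\psi:A^{ue}\rightarrow B$ with $f=\psi M$ and $g=\psi H$ must satisfy $\psi(M_a)=f(a)=\phi(M_a)$ and $\psi(H_a)=g(a)=\phi(H_a)$, and since $\{M_a,H_a:a\in A\}$ generates $A^{ue}$ as an algebra, $\psi=\phi$. I expect no serious obstacle: the matching of the defining relations to $(\mathbf{P1})$–$(\mathbf{P4})$ is bookkeeping already recorded in Remark \ref{rmk:ex-triple}, so the only step demanding genuine (though routine) care is the inductive propagation of the intertwining $\phi D=d_B\phi$ from the generators to arbitrary products, which hinges on $D$ and $d_B$ sharing the same Leibniz rule.
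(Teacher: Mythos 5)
Your proposal is correct and follows essentially the same route as the paper: match the five defining relations \eqref{eq:rela-M-pro}$\sim$\eqref{eq:rela-M-unit} against $(\mathbf{P1})$--$(\mathbf{P4})$, check compatibility with $D$ on the generators $M_a,H_a$, and deduce uniqueness from the fact that these generate $A^{ue}$. If anything, your write-up is more complete than the paper's, which starts by assuming $\phi$ exists and omits both the explicit factorization through the free algebra and the inductive propagation of $\phi D=d_B\phi$ from generators to arbitrary products.
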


\begin{proof}
Assume that the homomorphism $\phi$ exists. For all $a\in A$, we obtain
\begin{equation*}
  \phi(M_a)=f(a),\quad\phi(H_a)=g(a).
\end{equation*}  
It suffices to prove that $\phi$ preserves all the relations \eqref{eq:rela-M-pro}$\sim$\eqref{eq:rela-M-unit} in Definition \ref{def:Uni-enve-alg}.

(P1) gives
\begin{equation*}
  \phi(M_{ab})=f(ab)=f(a)f(b)=\phi(M_{a})\phi(M_{b}),\quad\forall a,b\in A,
\end{equation*}
and $\phi(M_1)=f(1)=1=\phi(1)$, thus $\phi$ preserves \eqref{eq:rela-M-pro} and \eqref{eq:rela-M-unit}. 

For all $a,b\in A$, (P2) implies that
\begin{equation*}
  \phi(H_{\{a,b\}})=g(\{a,b\})=\{g(a),g(b)\}=g(a)g(b)-g(b)g(a)=\phi(H_a)\phi(H_b)-\phi(H_b)\phi(H_a),
\end{equation*}
hence $\phi$ preserves \eqref{eq:rela-H-bra}. Similarly, $\phi$ keeps \eqref{eq:rela-M-bra} and \eqref{eq:rela-H-pro} in turn because of (P3) and (P4).

Furthermore, for $a\in A$ we have
\begin{equation*}
  \phi D(M_a)=\phi(M_{d(a)})=f(d(a)),\quad\phi D(H_a)=\phi(H_{d(a)})=g(d(a)),
\end{equation*}
and it is clear that $\phi$ is unique. Consequently, the proof is finished.

\end{proof}

\begin{lem}\label{lem:AtensorB}
Let $(A,\cdot_A,\{\cdot,\cdot\}_A,d_A)$,~$(B,\cdot_B,\{\cdot,\cdot\}_B,d_B)$ be two $\lambda$-DP (or $\lambda$-mDP) algebras. If $(C,f,g)$ and $(D,j,k)$ are two $\mathcal{P}$-triples corresponding to $A$ and $B$, respectively. Then $T:=(C\otimes D,f\otimes j,f\otimes k+g\otimes j)$ is a $\mathcal{P}$-triple with respect to $A\otimes B$, where for all $a \in A, b \in B$,
\begin{equation*}
  (f\otimes j)(a\otimes b)=f(a)\otimes j(b).
\end{equation*}

\end{lem}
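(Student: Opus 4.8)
The plan is to verify the four axioms $(\mathbf{P1})$--$(\mathbf{P4})$ of Definition \ref{defi:P-triple} for the triple $(C\otimes D,F,G)$ with $F=f\otimes j$ and $G=f\otimes k+g\otimes j$, taking $A\otimes B$ with the $\lambda$-DP structure of Proposition \ref{pro:Poi-tensor-alg}. The very first step is to record that $C\otimes D$, with componentwise product $(c\otimes e)(c'\otimes e')=(c\cdot_C c')\otimes(e\cdot_D e')$ and differential $d_{C\otimes D}(c\otimes e)=d_C(c)\otimes e+c\otimes d_D(e)+\lambda\, d_C(c)\otimes d_D(e)$, is again a (modified) $\lambda$-differential algebra; this is exactly the associative part of the computation already carried out in Proposition \ref{pro:Poi-tensor-alg}, so it may be quoted, and its commutator bracket $\{x,y\}_{C\otimes D}=xy-yx$ is the bracket used in $(\mathbf{P2})$--$(\mathbf{P3})$. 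Throughout, the one structural fact I will lean on repeatedly is that, since $\cdot_A$ and $\cdot_B$ are commutative and $f,j$ are multiplicative by $(\mathbf{P1})$, the images $f(A)\subseteq C$ and $j(B)\subseteq D$ are commutative; equivalently $f(a)f(a')=f(a')f(a)$ and $j(b)j(b')=j(b')j(b)$ for all $a,a'\in A$, $b,b'\in B$.

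For $(\mathbf{P1})$, multiplicativity of $F$ is immediate from $F\big((a\otimes b)\star(a'\otimes b')\big)=f(aa')\otimes j(bb')=\big(f(a)\otimes j(b)\big)\big(f(a')\otimes j(b')\big)$ using that $f,j$ are algebra homomorphisms, together with $F(1\otimes 1)=1$. Compatibility of $F$ with the differentials is a direct match: expanding $F\big(d(a\otimes b)\big)$ by the formula for $d$ on $A\otimes B$ and replacing $f\circ d_A,\ j\circ d_B$ by $d_C\circ f,\ d_D\circ j$ reproduces exactly the three terms of $d_{C\otimes D}\big(F(a\otimes b)\big)$, because the $\lambda$-weighted Leibniz shape of the differential is the same on both tensor products. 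The same mechanism settles the differential half of $(\mathbf{P2})$: since $f,g,j,k$ all intertwine the differentials, expanding $G\big(d(a\otimes b)\big)$ and $d_{C\otimes D}\big(G(a\otimes b)\big)$ yields the identical six terms.

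The substance of the proof is the bracket half of $(\mathbf{P2})$, namely $G\big(\{a\otimes b,a'\otimes b'\}\big)=\{G(a\otimes b),G(a'\otimes b')\}_{C\otimes D}$, together with the compatibilities $(\mathbf{P3})$ and $(\mathbf{P4})$. For each I would first expand the left-hand side by the bracket (resp.\ product) formula of Proposition \ref{pro:Poi-tensor-alg}, and then rewrite every occurrence of $f(\{a,a'\}_A),\ g(a\cdot_A a'),\ j(\{b,b'\}_B),\ k(b\cdot_B b')$ --- and, for $(\mathbf{P2})$, also $g(\{a,a'\}_A)$ and $k(\{b,b'\}_B)$ --- by the appropriate axiom $(\mathbf{P2})$--$(\mathbf{P4})$ of the triples $(C,f,g)$ and $(D,j,k)$. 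On the right-hand side I would expand the relevant commutator or symmetric product using linearity of $F$ and $G$ and the componentwise multiplication of $C\otimes D$. The two sides then match term by term, the mismatched cross terms being reconciled by the commutativity of $f(A)$ and $j(B)$ (for instance, in $(\mathbf{P3})$ a term $f(a)f(a')\otimes j(b')k(b)$ is identified with $f(a')f(a)\otimes j(b')k(b)$).

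I expect the bracket half of $(\mathbf{P2})$ to be the main obstacle: because $G$ is a sum of the two tensor maps $f\otimes k$ and $g\otimes j$, each side of $G(\{X,Y\})=\{G(X),G(Y)\}_{C\otimes D}$ expands into eight tensor monomials, and collapsing them requires invoking all four axioms of \emph{both} $\mathcal{P}$-triples at once, in tandem with the two commutativity relations. By contrast $(\mathbf{P3})$ and $(\mathbf{P4})$ each expand into only four monomials per side and fall out more quickly, as the cancellation pattern I sketched already indicates. Once the four axioms are checked, $(C\otimes D,F,G)$ is a $\mathcal{P}$-triple of $A\otimes B$ by Definition \ref{defi:P-triple}, and the modified case follows by the same computation with the differential replaced by its modified analogue, completing the proof.
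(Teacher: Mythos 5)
Your proposal is correct and follows essentially the same route as the paper: verify $(\mathbf{P1})$--$(\mathbf{P4})$ for $(C\otimes D,\,f\otimes j,\,f\otimes k+g\otimes j)$ by direct expansion against the tensor-product structure of Proposition \ref{pro:Poi-tensor-alg}, reconciling the cross terms via the commutativity relations $f(a)f(a')=f(a')f(a)$ and $j(b)j(b')=j(b')j(b)$. The only difference is one of completeness: the paper writes out only $(\mathbf{P1})$ and $(\mathbf{P4})$ and declares $(\mathbf{P2})$, $(\mathbf{P3})$ and the modified case ``similar,'' whereas you additionally flag the bracket half of $(\mathbf{P2})$ as the longest verification --- a fair observation, but not a different method.
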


\begin{proof}
We give the proof only for (P1), (P4) in the differential case. 

(P1)~By Proposition \ref{pro:Poi-tensor-alg}, for all $a_1,a_2\in A, b_1,b_2\in B$, we have
\begin{eqnarray*}
  (f\otimes j)[(a_1\otimes b_1)(a_2\otimes b_2)] &=& (f\otimes j)(a_1a_2\otimes b_1b_2)=f(a_1a_2)\otimes j(b_1b_2) \\
  &=& f(a_1)f(a_2)\otimes j(b_1)j(b_2)=\big(f(a_1)\otimes j(b_1)\big)\big(f(a_2)\otimes j(b_2)\big) \\
  &=& (f\otimes j)(a_1\otimes b_1)(f\otimes j)(a_2\otimes b_2). 
\end{eqnarray*}
Furthermore, we have
\begin{eqnarray*}
  (f\otimes j)d_{A\otimes B}(a\otimes b) &=& (f\otimes j)\big(d_A(a)\otimes b+a\otimes d_B(b)+\lambda d_A(a)\otimes d_B(b)\big) \\
  &=& fd_A(a)\otimes j(b)+f(a)\otimes jd_B(b)+\lambda fd_A(a)\otimes jd_B(b)\\
  &=& d_Cf(a)\otimes j(b)+f(a)\otimes d_Dj(b)+\lambda  d_Cf(a)\otimes d_Dj(b)\\
  &=& d_{C\otimes D}(f\otimes j)(a\otimes b).
\end{eqnarray*}
Thus, $f\otimes j:A\otimes B\rightarrow C\otimes D$ is a $\lambda$-differential algebra homomorphism.

(P4)~On the one hand,
\begin{eqnarray*}
  (f\otimes k+g\otimes j)\big((a_1\otimes b_1)(a_2\otimes b_2)\big) &=& (f\otimes k+g\otimes j,a_1a_2\otimes b_1b_2)\\
  &=& f(a_1a_2)\otimes k(b_1b_2)+g(a_1a_2)\otimes j(b_1b_2)\\
  &=& f(a_1)f(a_2)\otimes\big(j(b_1)k(b_2)+j(b_2)k(b_1)\big)\\
  &+& \big(f(a_1)g(a_2)+f(a_2)g(a_1)\big)\otimes j(b_1)j(b_2)\\
  &=& f(a_1)f(a_2)\otimes j(b_1)k(b_2)+f(a_1)f(a_2)\otimes j(b_2)k(b_1)\\
  &+& f(a_1)g(a_2)\otimes j(b_1)j(b_2)+f(a_2)g(a_1)\otimes j(b_1)j(b_2).
\end{eqnarray*}
On the other hand,
\begin{eqnarray*}
  &&(f\otimes j)(a_1\otimes b_1)(f\otimes k+g\otimes j)(a_2\otimes b_2)+ (f\otimes j)(a_2\otimes b_2)(f\otimes k+g\otimes j)(a_1\otimes b_1) \\
  &=& \big(f(a_1)\otimes j(b_1)\big)\big(f(a_2)\otimes k(b_2)+g(a_2)\otimes j(b_2)\big)\\
  &+&\big(f(a_2)\otimes j(b_2)\big)\big(f(a_1)\otimes k(b_1)+g(a_1)\otimes j(b_1)\big)\\
  &=&f(a_1)f(a_2)\otimes j(b_1)k(b_2)+f(a_1)g(a_2)\otimes j(b_1)j(b_2)\\
  &+& f(a_2)f(a_1)\otimes j(b_2)k(b_1)+f(a_2)g(a_1)\otimes j(b_2)j(b_1).
\end{eqnarray*}
Note that
\begin{equation*}
  f(a_1)f(a_2)=f(a_2)f(a_1),\quad j(b_1)j(b_2)=j(b_2)j(b_1).
\end{equation*}
Replacing these things in the last equality, then (P4) holds. (P2) and (P3) can be proved similarly. 
The modified differential case can be proved similarly. This completes the proof. 

\end{proof}

\begin{thm}\label{thm:oppo-tensorpro}
Let $(A,\cdot_A,\{\cdot,\cdot\}_A,d_A)$,~$(B,\cdot_B,\{\cdot,\cdot\}_B,d_B)$ be two $\lambda$-DP (or $\lambda$-mDP) algebras. Then we obtain
\begin{description}
  \item[(i)] $(A^{op})^{ue}\cong(A^{ue})^{op}$.
  \item[(ii)] $(A\otimes B)^{ue}\cong A^{ue}\otimes B^{ue}$.
  \item[(iii)] $(A\otimes A^{op})^{ue}\cong A^{ue}\otimes (A^{ue})^{op}$.
\end{description}

\end{thm}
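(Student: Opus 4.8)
The plan is to prove all three isomorphisms by exhibiting explicit maps on the generators $M_a,H_a$ and verifying that they respect the defining relations \eqref{eq:rela-M-pro}--\eqref{eq:rela-M-unit}, producing these maps through the universal property of Theorem \ref{thm:bi-commute} and then checking that the two composites are identities on generators. I treat the ordinary differential case throughout; the modified case runs in parallel, using Proposition \ref{pro:mPoi-tensor-alg} in place of Proposition \ref{pro:Poi-tensor-alg}.

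For (i), I would first make $(A^{ue})^{op}$ into a $\mathcal{P}$-triple of $A^{op}$ by taking $f=M$ and $g=H$, now read inside the opposite algebra. Writing $\star$ for the reversed product $x\star y=yx$ of $(A^{ue})^{op}$, so that its commutator bracket is $yx-xy$, conditions $(\mathbf{P1})$ and $(\mathbf{P2})$ of Definition \ref{defi:P-triple} follow from the commutativity of the $M_a$ and from $\{a,b\}_{op}=-\{a,b\}$, while $(\mathbf{P3})$ and $(\mathbf{P4})$ reduce to the relations \eqref{eq:rela-M-bra} and \eqref{eq:rela-H-pro}. The only nonformal point is the identity $H_bM_a+H_aM_b=M_aH_b+M_bH_a$, which I obtain by applying \eqref{eq:rela-M-bra} to both $H_aM_b$ and $H_bM_a$, the two correction terms $\pm M_{\{a,b\}}$ cancelling. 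Theorem \ref{thm:bi-commute} then yields a $\lambda$-differential algebra homomorphism $\phi\colon(A^{op})^{ue}\to(A^{ue})^{op}$ with $\phi(M_a)=M_a$ and $\phi(H_a)=H_a$; running the same construction with $A^{op}$ in place of $A$ and using $(A^{op})^{op}=A$ supplies an inverse on generators, so $\phi$ is an isomorphism.

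For (ii), the forward map is obtained cheaply: feeding the canonical $\mathcal{P}$-triples $(A^{ue},M,H)$ and $(B^{ue},M,H)$ of Remark \ref{rmk:ex-triple} into Lemma \ref{lem:AtensorB} and then Theorem \ref{thm:bi-commute} gives a $\lambda$-differential algebra homomorphism $\Phi\colon(A\otimes B)^{ue}\to A^{ue}\otimes B^{ue}$ with $M_{a\otimes b}\mapsto M_a\otimes M_b$ and $H_{a\otimes b}\mapsto M_a\otimes H_b+H_a\otimes M_b$. The substance lies in the inverse. The unital inclusions $\iota_A\colon a\mapsto a\otimes 1_B$ and $\iota_B\colon b\mapsto 1_A\otimes b$ are $\lambda$-DP algebra homomorphisms (one uses $d_B(1_B)=0$ and $\{1,\cdot\}=0$ to check the differential and the bracket), and post-composing them with the canonical triples and invoking Theorem \ref{thm:bi-commute} yields induced homomorphisms $\iota_A^{ue}\colon A^{ue}\to(A\otimes B)^{ue}$ and $\iota_B^{ue}\colon B^{ue}\to(A\otimes B)^{ue}$. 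To define $\Psi(u\otimes v):=\iota_A^{ue}(u)\,\iota_B^{ue}(v)$ I must check that the images of $\iota_A^{ue}$ and $\iota_B^{ue}$ commute, which is the heart of the matter: it rests on the computation $\{a\otimes 1_B,\,1_A\otimes b\}_{A\otimes B}=0$, whence \eqref{eq:rela-M-pro}, \eqref{eq:rela-H-bra} and \eqref{eq:rela-M-bra} force each of the four pairs among $M_{a\otimes 1},H_{a\otimes 1}$ and $M_{1\otimes b},H_{1\otimes b}$ to commute in $(A\otimes B)^{ue}$. Once $\Psi$ is well defined, its compatibility with $D$ follows from the weighted Leibniz rule holding in both tensor products, and $\Phi\Psi$ and $\Psi\Phi$ are verified to be the identity on generators (using \eqref{eq:rela-H-pro} together with $M_1=1$ and $H_1=0$, the latter deduced from \eqref{eq:rela-H-pro}).

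Finally, (iii) is formal: apply (ii) with $B=A^{op}$ to get $(A\otimes A^{op})^{ue}\cong A^{ue}\otimes(A^{op})^{ue}$, then tensor the isomorphism of (i) with the identity of $A^{ue}$ to replace $(A^{op})^{ue}$ by $(A^{ue})^{op}$. I expect the commuting-images step in (ii) to be the main obstacle, since everything else is either formal or a direct relation check; the point to get right is precisely that the mixed bracket $\{a\otimes 1_B,1_A\otimes b\}$ vanishes, as this is what decouples the two tensor factors inside the noncommutative algebra $(A\otimes B)^{ue}$.
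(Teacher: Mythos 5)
Your proposal is correct, and it uses the same core machinery as the paper ($\mathcal{P}$-triples, Remark \ref{rmk:ex-triple}, Lemma \ref{lem:AtensorB} and Theorem \ref{thm:bi-commute}), but it concludes differently. The paper proves (ii) by showing that $A^{ue}\otimes B^{ue}$, equipped with the triple $T=(A^{ue}\otimes B^{ue},M_A\otimes M_B,M_A\otimes H_B+H_A\otimes M_B)$, satisfies the same universal property as $(A\otimes B)^{ue}$: for an arbitrary $\mathcal{P}$-triple $(C,f,g)$ of $A\otimes B$ it restricts along the inclusions $i_A,i_B$ to get $\phi_A:A^{ue}\to C$ and $\phi_B:B^{ue}\to C$, assembles a unique $\phi:A^{ue}\otimes B^{ue}\to C$, and invokes uniqueness of the universal object; (i) is asserted "similarly" and (iii) is deduced from (i) and (ii), as you do. You instead build an explicit two-sided inverse $\Psi(u\otimes v)=\iota_A^{ue}(u)\,\iota_B^{ue}(v)$ and check the composites on generators. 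The trade-off is instructive: the paper's route avoids computing composites but its assembly of $\phi$ from $\phi_A$ and $\phi_B$ (namely $\phi(u\otimes v)=\phi_A(u)\phi_B(v)$ being multiplicative) silently requires exactly the commuting-images fact you isolate, resting on $\{a\otimes 1_B,1_A\otimes b\}_{A\otimes B}=0$ together with relations \eqref{eq:rela-M-pro}--\eqref{eq:rela-M-bra}; your version makes that step, which the paper leaves to "tracking the diagrams," fully explicit, and your treatment of (i) via the $\mathcal{P}$-triple $\bigl((A^{ue})^{op},M,H\bigr)$ supplies the argument the paper omits, including the genuinely needed identity $H_bM_a+H_aM_b=M_aH_b+M_bH_a$. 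Two small points to record when you write it up: in the modified case the computation showing $\iota_A$ intertwines the differentials uses $d_B(1_B)=-\lambda 1_B$ (forced by \eqref{eq:modi-lambda}) rather than $d_B(1_B)=0$, with the extra $\lambda\,a\otimes 1_B$ term cancelling it; and the commutativity $M_{a\otimes 1}M_{1\otimes b}=M_{1\otimes b}M_{a\otimes 1}$ should be noted separately via \eqref{eq:rela-M-pro} and $(a\otimes 1)(1\otimes b)=(1\otimes b)(a\otimes 1)$, since \eqref{eq:rela-H-bra} and \eqref{eq:rela-M-bra} only handle the pairs involving an $H$.
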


\begin{proof}
We only give the proof of (ii) here.
By Remark \ref{rmk:ex-triple}, we can obtain two $\mathcal{P}$-triples $(A^{ue}, M_A, H_A)$ and $(B^{ue}, M_B, H_B)$ corresponding to A and B, respectively. 

Suppose $(C, f, g)$ is a $\mathcal{P}$-triple of $A\otimes B$. On the one hand, by Theorem \ref{thm:bi-commute}, there exists a unique $\lambda$-DP (or $\lambda$-mDP) algebra homomorphism $\phi_{A\otimes B}:(A\otimes B)^{ue}\rightarrow C$.

On the other hand, by Lemma \ref{lem:AtensorB}, there is a $\mathcal{P}$-triple
\begin{equation*}
  T:=(A^{ue}\otimes B^{ue},M_A\otimes M_B,M_A\otimes H_B+H_A\otimes M_B)
\end{equation*}
of $A\otimes B$.

Next we show the following diagrams are commutative.
\[
\xymatrix{
& A\ar[dl]_-{i_A}\ar[dr]^-{f_A,g_A}\ar[rr]^-{M_A,H_A} & & A^{ue}\ar@{-->}[dl]^-{\exists !\phi_{A}}\ar[dr]^-{i_{A^{ue}}}&\\
A\otimes B\ar[rr]^-{f,g}&  &C & & A^{ue}\otimes B^{ue}\ar@{-->}[ll]_{\exists !\phi}\\
& B\ar[ur]_-{f_B,g_B}\ar[ul]^-{i_B}\ar[rr]_-{M_B,H_B}& & B^{ue}\ar@{-->}[ul]_-{\exists !\phi_{B}}\ar[ur]_-{i_{B^{ue}}}&\\
}
\]
For a natural inclusion $i_A:A\rightarrow A\otimes B$, we have $(C, f_A, g_A)$ is a $\mathcal{P}$-triple of $A$, where $f_A=fi_A$.
Therefore, by Theorem \ref{thm:bi-commute}, there exists a unique $\lambda$-DP (or $\lambda$-mDP) algebra homomorphism $\phi_A:A^{ue}\rightarrow C$ such that $f_A=\phi_AM_A$ and $g_A=\phi_AH_A$. Similarly, there exists a unique $\lambda$-DP (or $\lambda$-mDP) algebra homomorphism $\phi_B:B^{ue}\rightarrow C$ such that $f_B=\phi_BM_B$ and $g_B=\phi_BH_B$.       

Furthermore, we can obtain a unique $\lambda$-DP (or $\lambda$-mDP) algebra homomorphism $\phi:A^{ue}\otimes B^{ue}\rightarrow C$ making the above diagrams bi-commutative, where $i_{A^{ue}}:A^{ue}\rightarrow A^{ue}\otimes B^{ue}$ and $i_{B^{ue}}:B^{ue}\rightarrow A^{ue}\otimes B^{ue}$ are two natural inclusions.

Finally, by tracking the diagrams, $\phi$ is unique. Hence (ii) holds. We can get (i) by the similar way, and (iii) is a corollary of (i) and (ii).
\end{proof}

\end{document}